\newcommand{\fes}{\rho}
\newcommand{\tgraph}{\mathcal{G}}
\newcommand{\footprint}{\mathcal{G}_\downarrow}
\newcommand{\NN}{\mathbb{N}}
\newcommand{\wlogsm}{without loss of generality }
\newcommand{\delaybetter}{\textsc{DelayBetter}}
\newcommand{\marxcoloring}{\textsc
    {Cubic Bipartite Planar Edge Precoloring Extension}}
\newcommand{\marxcol}{\textsc{CBP-EPE}}
\newcommand{\tinit}{T_{\mathrm{init}}}
\newcommand{\tmax}{T_{\max}}
    \DeclareFontShape{T1}{lmr}{m}{scit}{<->ssub*lmr/m/scsl}{}%
    \DeclareFontShape{T1}{lmr}{bx}{sc}{<->ssub*lmr/m/sc}{}%
\title{Better late, then? The hardness of choosing delays to meet passenger demands in temporal graphs} 
\titlerunning{Better late, then?} 
\author{David C. {Kutner}\footnote{Corresponding author.}}{Department of Computer Science, Durham University, UK \and \url{https://dave-ck.github.io/} }{david.c.kutner@durham.ac.uk;david.kutner@glasgow.ac.uk}{https://orcid.org/0000-0003-2979-4513}{revised this work while a Research Assistant at the University of Glasgow funded by EPSRC grant EP/T004878/1, \emph{Multilayer Algorithmics to Leverage Graph Structure (MultilayerALGS)}.} 
\author{Anouk Sommer}{Karlsruher Institut für Technologie (KIT), Germany}{anouk.sommer@student.kit.edu}{https://orcid.org/0009-0006-1366-4377}{This work was partly supported by the Deutscher Akademischer Austauschdienst (DAAD) project \emph{Schnell aber spät: breaking Deutsche Bahn even more with graph theory}.}
\authorrunning{D. C. Kutner and A. Sommer} 
\keywords{Temporal Graphs, Computational Complexity, Delay Management, Train Networks.} 
\begin{document}

\maketitle

\begin{abstract}
In train networks, carefully-chosen delays may be beneficial for certain passengers, who would otherwise miss some connection.
Given a simple (directed or undirected) temporal graph and a set of passengers (each specifying a starting vertex, an ending vertex, and a desired arrival time), we ask whether it is possible to delay some of the edges of the temporal graph to realize all the passengers' demands. We call this problem \textsc{DelayBetter (DB)}, and study it along with two variants: in \textsc{$\delta$-DelayBetter}, each delay must be of at most $\delta$; in ($\delta$-)\textsc{Path DB}, passengers also fully specify the vertices they should visit on their journey. 
On the positive side, we give a polynomial-time algorithm for \textsc{Path DB} and $\delta$-\textsc{Path DB}, and obtain as a corollary a polynomial-time algorithm for DB and $\delta$-DB on trees. We also provide an fpt algorithm for both problems parameterized by the size of the graph's Feedback Edge Set together with the number of passengers. 
On the negative side, we show NP-completeness of ($1$-)DB on bounded-degree temporal graphs even when the lifetime is $2$, and of ($10$-)DB on bounded-degree planar temporal graphs of lifetime $19$.
Our results complement previous work studying reachability problems in temporal graphs with delaying operations. 
This is to our knowledge the first such problem in which the aim is to facilitate travel between specific points (as opposed to facilitating or impeding a broadcast from one or many sources). 
\end{abstract}    

\section{Introduction}

In the first half of 2024, punctuality of Deutsche Bahn's long distance trains was 62.7\% \cite{deutschebahn_punctuality_2024}. Disruptions to train networks often result in passengers arriving later than planned or not at all. 
Whenever a train is late and the passengers of this train would miss a connecting train, there are two choices:
either, the second train departs on time, meaning that the passengers of the first train miss their connection, or the second train waits, meaning that the passengers can make the connection, at the cost of this train now also being late.
The problem of deciding whether (and by how much) such services should wait is the Delay Management problem, well studied in Operations Research. 

Separately, the field of temporal graph theory provides a general, rigorous mathematical framework with which to investigate the complexity due to the intrinsically dynamic properties of certain real-world networks. Briefly, a temporal graph is one whose edge set changes over time. Much work has been devoted to \emph{modification} problems of the form ``Given a temporal graph $\tgraph$, apply some (delaying or other) operations to satisfy some reachability property'' (see Table \ref{table:problems}), but interestingly the problem of managing delays to ensure that specific passengers arrive at their destination on time has yet to be studied in this framework. \cref{fig: intro} shows a simple example of a temporal graph illustrating such a scenario. 

The present work aims to study the practically interesting problem of Delay Management through the lens of temporal graph theory. We introduce the decision problem \textsc{DelayBetter} (or simply \textsc{DB}) which asks, given a temporal graph and a collection of passengers on its vertices, each with a desired destination and arrival time, whether it is feasible to delay some edges of the graph to satisfy each of the passengers. We also consider variants of the problem: \textsc{Path-DB}, where passengers must be routed along specific edges prescribed in the input, \textsc{$\delta$-DB}, where each edge can be delayed by at most some fixed $\delta$, and $\delta$-\textsc{Path-DB} combines both constraints. We present (parameterized) tractability and hardness results for these problems, including on structurally restricted graph classes. 


\begin{figure}
    \centering
    \includegraphics[width=\linewidth, page=12]{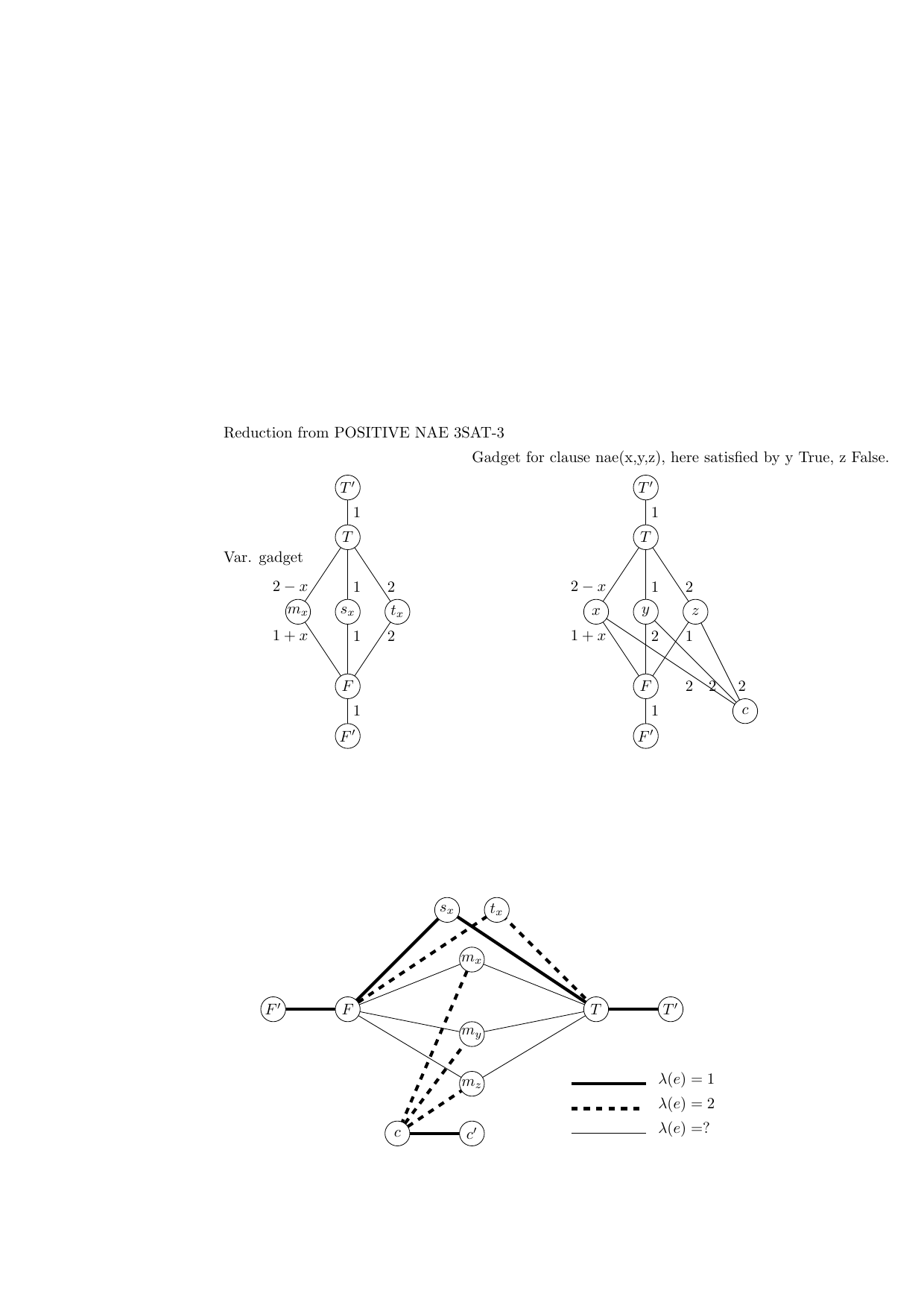}
    \caption{A temporal graph on 6 vertices. Consider the case where passengers at each of $u,v$, and $w$ wish to travel to each of $x,y$, and $z$ respectively, arriving at or before time 4. Then delaying the edge from $w$ to $x$ by at least $2$ is necessary for the two leftmost passengers to arrive on time, but entails that the passenger starting at $w$ cannot arrive at $z$ before time $5$. }
    \label{fig: intro}
\end{figure}



\def\arraystretch{1.5}
\begin{table}[!ht]
		\centering
		\setlength{\tabcolsep}{1pt}
		\begin{tabular}{
				| m{.23\linewidth}
				| >{\centering\arraybackslash}m{.14\linewidth} 
                | >{\centering\arraybackslash}m{.2\linewidth} 
				| >{\centering\arraybackslash}m{.21\linewidth} 
                    | >{\centering\arraybackslash}m{.2\linewidth} |}
                    
			\hline
            \textbf{Problem} 
            & \textbf{Operation} 
            & \textbf{Restriction}
            & \textbf{Reachability condition} 
            & \textbf{Additional inputs}
            \\
			\hhline{|=|=|=|=|=|}
			{\textsc{ReachFast} \cite{deligkas_minimizing_2023}}
            & Shift $(+-)$
            & N/A
            & $\forall x \in S: R_x=V$
        & sources $S \subseteq V$, $\tau \in \mathbb{N}$ to be minimized
            \\
            \hline
            {\textsc{TRLP} \cite{enright_reachability_2025}}            
            & Shift $(+-)$
            & up to $\eta$ edges, by up to $\delta$ each
            & $|R_x| \geq k$
            & designated source $x \in V$, $\eta, \delta, k \in \NN$
            \\
			\hline
            \textsc{MinReachDelay} \cite{molter_temporal_2024}
            & Delay
            & up to $\eta$ time-edges by exactly $\delta$ each
            & $|R_S| \leq k$
            & sources $S \subseteq V$, $\eta, \delta, k \in \NN$
            \\
			\hline
            \textsc{MinReach} \cite{deligkas_optimizing_2022}
            & Delay
            & up to $\eta$ time-edges by up to $\delta$ each
            & $|R_S| \leq k$
            & sources $S \subseteq V$, $\eta, \delta, k \in \NN$
            \\
			\hline
           {\textsc{MaxMinTaRDiS} }\cite{kutner_temporal_2024}
            & Choose time-labels
            & lifetime is $\tau$
            & $\nexists S \subseteq V, |S| < k: R_S = V$
            & $\tau,k \in \NN$
            \\
			\hline
            \textsc{($\delta$-)DelayBetter}  
            & Delay
            & (by up to $\delta$ per edge)
            & $(u,v,t) \in D \newline \Rightarrow v \in R_u^t$
            & $D \subseteq V \times V \times \NN$ 
            ($\delta \in \NN$)
            \\
            \hline
            \textsc{($\delta$-)Path DB}
            & Delay
            & (by up to $\delta$ per edge)
            & as above along specified path
            & $D \subseteq V \times V \times \NN \times 2^E$
            ($\delta \in \NN$)
            \\
            \hline
            
		\end{tabular}\medskip
		\caption{
        Comparison of our problems \textsc{DelayBetter} and \textsc{Path DelayBetter/Path DB} to problems in the literature. $R_u^t$ (resp.  $R_S^t$) denotes the set of vertices reachable from vertex $u$ (resp. any vertex $s \in S$) by time-step $t$ (when $t$ is the lifetime of the temporal graph, it is omitted).}\label{table:problems}
	\end{table}

\subsection{Problem setting}\label{sec: problem setting}

We denote $[i,j]$ the integer interval $\{i,i+1,\ldots,j\}$, and say a graph is \emph{cubic} if all vertices in the graph have degree 3. We now give some definitions from temporal graph theory. 

\begin{definition}[Temporal graph, temporal path]
    A \emph{temporal graph} $\tgraph=(G, \lambda)$ consists of a static graph $G$ (also called its \emph{footprint}, denoted $\footprint$) and a \emph{temporal assignment} $\lambda:E(G)\to \NN$. 
    The \emph{lifetime} $\tau\in \NN$ of a temporal graph $(G, \lambda)$ is the maximum time assigned to any edge by $\lambda$, and the \emph{time-edges} of a temporal graph $\mathcal{E}(\tgraph)$ are $\{(u,v,\lambda(u,v)) | (u,v)\in E(G)\}$.
    A \emph{temporal path} is a path in $\footprint$ whose edges have strictly increasing time-labels, and the \emph{arrival time} of a temporal path is the time-label of its last edge.
    
\end{definition}

We take this opportunity to note a more general definition of temporal graphs is often studied, where each edge may have multiple time labels (\emph{non-simple} temporal graphs). Another variant applies a notion of temporal reachability which allows for the traversal of consecutive edges at nondecreasing (rather than strictly increasing) times (\emph{non-strict} temporal paths). For a discussion of these different models (in the undirected setting only) we refer the interested reader to \cite{casteigts_simple_2022}. Hardness results from the simple setting generalize to the non-simple setting, and tractability for the non-simple setting may be applied to the simple setting. 
In the present work, we focus exclusively on strict temporal paths and simple (directed or undirected) temporal graphs.


\begin{definition}[Delaying]
    We say that a temporal assignment $\lambda'$ is a \emph{delaying} of an assignment $\lambda$ if $\lambda'(e)\ge \lambda(e)$ for every $e$.
    If $\lambda'(e)=\lambda(e)+\delta$, we say that $e$ is delayed by $\delta$ in $\lambda'$, and that $\lambda'$ is a $\delta$-delaying of $\lambda$ if every $e$ is delayed by at most $\delta$ in $\lambda'$ (hence a $\delta$-delaying is also a $(\delta+1)$-delaying). 
    
\end{definition}

We can now introduce our protagonists:

\begin{framed}
    \noindent\underline{\textsc{($\delta$-)DelayBetter}}\\ 
    {\bf Instance}: Temporal graph $\mathcal{G}=(G,\lambda)$, demands $D \subseteq V(G) \times V(G) \times \NN$ (, $\delta \in \NN$).\\
    {\bf Question}: Does there exist a ($\delta$-)delaying $\lambda'$ of $\lambda$ such that for each $(u,v,t) \in D$ there is a temporal path from $u$ to $v$ with arrival time at most $t$ in $(G,\lambda')$?
\end{framed}

\begin{framed}
    \noindent\underline{($\delta$-)\textsc{Path DelayBetter}}\\
    {\bf Instance}: Temporal graph $\mathcal{G}=(G,\lambda)$, demands $D \subseteq V(G) \times V(G) \times \NN \times 2^{E(G)}$ (,~$\delta \in \NN$).\\
    {\bf Question}: Does there exist a ($\delta$-)delaying $\lambda'$ of $\lambda$ such that for each $(u, v, t, P) \in D$ there is a temporal path from $u$ to $v$ in $(G,\lambda')$ with arrival time at most $t$ and footprint~$P$?
\end{framed}

We say a temporal graph $\tgraph$ is \emph{planar} if its footprint $\footprint$ is planar, and \emph{directed} (resp. \emph{undirected}) if $\footprint$ is directed (resp. undirected).
We use the shorthand \textsc{DB} for our problems, referring to, for example, \textsc{3-DB}, \textsc{Path DB}, or \textsc{DB}. For a demand $d$, we denote $d=(d_s,d_z,d_t)$. 
Restriction of, or parameterization by, the lifetime $\tau$ is often leveraged to obtain tractability of temporal graph problems. In our case, we denote by $\tinit$ the \emph{initial lifetime} (that of the temporal graph $\tgraph$ before delays are applied), and by $T_{\max}$ the latest arrival time required by any single demand -- that is, $\max_{d\in D}d_t$. We call $T_{\max}$ \emph{final lifetime} because it upper-bounds the lifetime of the temporal graph $(G,\lambda')$ (after delays are applied): any time-edge delayed beyond $T_{\max}$ in some feasible solution could instead be delayed to $T_{\max}$ instead (or not at all), since it will not be used by any passengers. For the same reason, we may assume without loss that $\tinit$ is at most $T_{\max}$.


\label{sec: related work}
\subsection{Related work}

\paragraph*{Temporal Graphs.}\label{sec: temporal graphs}
As we touched on earlier, modifying (or choosing) $\lambda$ to optimize a notion of reachability is a well-studied problem in temporal graph theory. Broadly, problems in this paradigm may either aim to \emph{worsen} or \emph{improve} the input temporal graph's connectedness. Problems in the first category (including  \textsc{MinReach} \cite{deligkas_optimizing_2022} and \textsc{MinReachDelay} \cite{molter_temporal_2024}) are typically motivated by practical cases where spread is undesired, such as epidemics. In the case of transportation networks, where connectedness is desired, the second category (which contains \textsc{TRLP} \cite{enright_reachability_2025} and \textsc{MaxMinTaRDiS} \cite{kutner_temporal_2024}) is of greater relevance. Of course, if the delays are controlled by an adversary, the opposite motivation becomes relevant to each problem: is there any strategy for the adversary to disconnect a transporation network, or facilitate disease spread? A related, but slightly different perspective on delays in temporal graphs is explored in \cite{fuchsle_temporal_2022} and  \cite{fuchsle_delay-robust_2022}, who determine how robust against unforeseen delays a given temporal graph is with and without re-routing of the passengers, respectively.

\paragraph*{Delay Management.}
\label{sec: delay management}
The Delay Management (DM) problem concerns itself with finding a good delaying strategy in a public transport network to minimize passenger inconvenience.
Usually, this means minimizing the total passenger delay, but other objectives like simultaneously minimizing the number of delayed trains or the operational costs have also been studied.
In the original problem, as introduced by \cite{schobel_model_2001}, passengers stick with their initial routes (as in \textsc{Path-DB}); a popular variant of the problem allows passenger re-routing (as in \textsc{DB}) \cite{dollevoet_delay_2012}. Both settings have since been the subject of much study, spanning both theory and practice. 

On the theoretical side, different models and algorithmic approaches have been introduced over the years \cite{binder_multi-objective_2017,ginkel_wait_2007,heilporn_optimization_2008,veelenturf_railway_2016,zhu_integrated_2020}. 
Due to modeling differences, studies of the computational complexity of different DM problem variants \cite{kanade_railway_2004,hutchison_computational_2005,schachtebeck_delay_2010} do not necessarily yield results for our problems.
In addition to minimizing an aggregate function (e.g., total weighted passenger delay \cite{kanade_railway_2004,hutchison_computational_2005}) rather than asking whether some specific set of passenger demands can be satisfied (as we do), DM problems are commonly formalized using \emph{event-activity networks} -- which are more expressive than temporal graphs. 
For example, the definition of DM in \cite{schachtebeck_delay_2010} includes \emph{headway constraints} (where two trains cannot use the same track segment simultaneously). Several interesting practically-motivated extensions are studied in this line of work, including a setting with slack times (trains may catch up on their delay), which makes the problem hard when the rail network is a line \cite{hutchison_computational_2005}, and the incorporation of rolling-stock circulation into the problem \cite{schachtebeck_delay_2010} -- though results in such settings do not straightforwardly translate into our model. 
Nonetheless, some results from these works can be adapted into the our setting; for example, Theorem 6.1 in \cite{kanade_railway_2004} could be adapted to show that \delaybetter ~is NP-complete in the directed setting with $T_{\max}=3$ (we strengthen this in \cref{thm:dir-np-c-lifetime2}).
Also, all of these works consider a directed model (as is natural for rail networks), whereas our results are proven for both directed and undirected temporal graphs.

On the more practical side, there have been a number of case-studies and data-driven approaches to this problem \cite{carosi_delay_2015,malucelli_delay_2019,zhang_train_2023}. 
In \cite{scozzaro_optimizing_2023}, a model for optimizing delays in rail and air travel combined is proposed, together with a European case study.
For a more comprehensive overview of the work in delay management, we refer the reader to \cite{cacchiani_overview_2014}, \cite{konig_review_2020}, and \cite{schobel_anita_optimization_2006}. A related area of research is the Timetabling Problem, which concerns itself with designing a timetable that is robust against delays. We refer the reader to \cite{geraets_cyclic_2007} for an introduction.

\label{sec: contribution}
\subsection{Our contribution}

We introduce the problems \textsc{($\delta$-)DelayBetter} and \textsc{($\delta$-)Path DB}, presenting (to our knowledge for the first time) a temporal graph-theoretic approach to the well-studied Delay Management problem.
On the positive side, we give a polynomial-time algorithm for ($\delta$-) \textsc{Path-DB}, and tractability for ($\delta$-)\textsc{DelayBetter} on trees as a corollary. Later, we leverage this algorithm to obtain a fixed-parameter tractable (fpt) algorithm parameterized by the number of demands and the size of the feedback edge set of the footprint graph. On the negative side, we establish that \textsc{DelayBetter} remains NP-complete on inputs with $T_{\max}=2$ in both the directed and undirected setting (which entails that $1$-\delaybetter ~is NP-complete under the same constraint).
Moreover, we show that the problem remains hard on planar (directed or undirected) temporal graphs with $\tmax=19$, even when $\delta=10$. Our results provide a first insight into the structural restrictions which do (and do not) suffice to guarantee tractability of this natural problem. 
Proofs of statements marked $(\ast)$ can be found in the appendix at the end of the paper.

\subsection{Discussion and open questions}

We show that ($\delta$-)\textsc{Path DB} is in P and that ($\delta$-)\delaybetter ~is fpt parameterized by $|D|+\fes$, where $\fes$ is the size of smallest feedback edge set of (the undirected version of) $\footprint$. 
It seems likely that the techniques used in those proofs could actually solve a broader family of problems -- including, for example, the natural extension of \delaybetter ~wherein demands specify a departure time as well as an arrival time, but also possibly problems which do not specify individual demands as part of the input. 
Can dependence on $|D|$ be eliminated from our fpt result? If not, then what structural parameter is sufficient to yield an fpt result without requiring $|D|$ as a parameter?
A more general question for future study is: what family of temporal graph modification problems admit an fpt algorithm in the size of the feedback edge set? 
Separately, what is complexity of the problems parameterized by fine-grained temporal parameters (e.g., vertex interval membership width \cite{bumpus_vimw_2021}), or by smaller structural parameters than $\rho$ (e.g., the feedback vertex number)? 
We note that for directed graphs, the size of a minimum feedback arc set (the deletion of which leaves a directed acyclic graph, or DAG) is insufficient, since we show in \cref{thm:planar} that the problem is NP-complete restricted to (planar) DAGs. 

Another question we leave open is: what is the complexity of \delaybetter ~restricted to planar inputs with $\tmax \in [2,18]$? 
(Our proofs of Theorems  \ref{thm:undir-np-c-lifetime2} and \ref{thm:dir-np-c-lifetime2} do not preserve planarity, and moreover reduce from a variant of \textsc{NAE 3SAT}, the restriction of which to planar instances is solvable in polynomial time \cite{moret_planarNAE3SAT_1988}.)
Also stemming from our planar proof is the question of whether $\delta$-\delaybetter ~restricted to planar graphs is computationally easy or hard for values of $\delta$ below $10$. Our proof was aimed at minimizing $\tmax$ while retaining planarity, so we expect that some easy adjustments to it might yield hardness for, e.g., $\delta=9$, but we expect different techniques are necessary to deal with the case of $\delta=1$ on planar graphs.

Yet another direction our investigation could be extended is to consider \emph{non-simple} temporal graphs. Our hardness results extend immediately to this case, but our algorithms do not -- in the non-simple setting, we do not expect our linear programming approach to work, and it is not even obvious whether our problems would be tractable restricted to trees.

Lastly, we observe that our results for directed and undirected versions of the problem are the same. This is particularly surprising because some of our results require substantially different proofs for each setting. An open question for future work is then: are there any natural restrictions on the input which entail instances are tractable in the directed case and computationally hard in the undirected case (or vice versa)?

\section{Preliminary Results}

We begin with some basic results, the proofs of which may help to familiarize the reader with the behavior of our problems.
We first establish a useful relation between \textsc{$\delta$-DB} and \textsc{DelayBetter}.
Clearly, \delaybetter ~is reducible to $\delta$-\delaybetter, by simply assigning  a sufficiently large value to $\delta$ (e.g., the final lifetime $\tmax$ of the \delaybetter ~instance). Interestingly, the converse also holds:

\begin{lemma}\label{lem:equiv}
    For any $\delta\in \mathbb N$, $\delta$-\textsc{DelayBetter} is reducible in linear time to \textsc{DelayBetter}. If the input instance is planar (resp. has bounded final lifetime) then the same holds for the output.
\end{lemma}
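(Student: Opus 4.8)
The plan is to force any \delaybetter{} solver to respect the bound $\delta$ by attaching, to every edge, a \emph{guard} demand whose satisfaction certifies that the edge was delayed by at most $\delta$. Throughout I would assume without loss of generality that $\delta \le \tmax$, justified exactly as in the observation above (delaying an edge past $\tmax$ is never useful); this is what will keep the final lifetime bounded. The difficulty is that a guard demand only constrains an edge if it is \emph{forced} to traverse that edge, and in an arbitrary footprint the endpoints of an edge have uncontrolled neighbourhoods, so a guard could satisfy its demand along some other route. My remedy is to create, for each edge, a degree-controlled chokepoint through which the guard must pass, and to pin the guard's arrival time so tightly that over-delaying the edge makes its guard demand infeasible.

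Concretely, for the directed case I would first double every time label (to make room) and then replace each arc $e=(u,v)$ with $\lambda(e)=t$ by a subdivision vertex $m_e$ together with arcs $(u,m_e)$ at time $2t-1$ and $(m_e,v)$ at time $2t$. I then add a fresh source $s_e$ with arc $(s_e,m_e)$ at time $2t-1$, a fresh sink $z_e$ with arc $(v,z_e)$ at time $2t+2\delta+1$, and the guard demand $(s_e,z_e,2t+2\delta+1)$; each original demand $(x,y,T)$ becomes $(x,y,2T)$. The point is that $m_e$ has a single out-arc, $(m_e,v)$: since $(u,m_e)$ and $(s_e,m_e)$ point \emph{into} $m_e$, a guard reaching $m_e$ can only leave towards $v$, and $z_e$ is reachable only through $(v,z_e)$. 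The deadline forces $(v,z_e)$ to its original label, so the guard succeeds iff it reaches $v$ by time $2t+2\delta$, i.e.\ iff $(m_e,v)$ is delayed to at most $2t+2\delta$ -- exactly an effective delay of at most $\delta$ on $e$. Crucially this forcing survives any delaying of $(u,m_e)$ (used to simulate a late departure across $e$), because directedness prevents the guard from ever traversing $(u,m_e)$ backwards.

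For correctness I would argue both directions. Given a $\delta$-delaying of $\tgraph$, delaying $(u,m_e)$ and $(m_e,v)$ in lock-step reproduces every original temporal path as a subdivided path with matching (doubled) arrival times, and leaves every guard satisfiable since no effective delay exceeds $\delta$. Conversely, any \delaybetter{} solution $\mu'$ of the constructed instance has, by the guard demands, $\mu'(m_e,v)\le 2t+2\delta$ for every $e$; reading off $\lambda'(e)=\lceil \mu'(m_e,v)/2\rceil\in[t,t+\delta]$ yields a $\delta$-delaying of $\tgraph$ under which every original demand is still met, the even/odd separation of labels making this back-translation order-preserving. Subdivision and the attachment of degree-one vertices are planarity-preserving, and every new label is at most $2\tmax+2\delta+1\le 4\tmax+1$, so planarity and bounded final lifetime are inherited; the construction is clearly linear.

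The step I expect to be the real obstacle is the undirected case, which I anticipate needs a genuinely different gadget. Two facts collide. First, the single label of an undirected edge $\{u,v\}$ \emph{couples} its two traversal directions: a delay to $t+d$ simultaneously postpones the arrival at $v$ (for $u\to v$) and at $u$ (for $v\to u$). Any gadget that reflects a delay in an arrival time must do so on the last edge entering the destination, which is direction-specific; splitting $e$ into two one-way ``lanes'' therefore decouples the two delays and creates false positives, since one can delay the $u\to v$ lane while leaving the $v\to u$ lane untouched, which no single label permits. Second, the chokepoint argument above relied on directedness to stop the guard escaping backwards once the entry edge is delayed; with an undirected edge, delaying the entry edge (needed to simulate a late departure) reopens exactly that escape, and no fixed choice of the guard's entry time can both block the escape and catch the exit across the whole range $d\in[0,\delta]$. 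Resolving this seems to require a carefully time-scaled gadget that retains a single shared delayable edge as the common point of both directions while still admitting a forced guard; pinning down such a gadget, and checking it does not reintroduce the decoupling above, is where I would expect the bulk of the effort to lie, and is presumably why the directed and undirected arguments differ substantially.
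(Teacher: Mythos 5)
Your directed construction is sound and essentially matches the paper's: both subdivide each arc $(u,v)$ into two halves on a doubled time scale and attach a forced guard demand (the paper calls these hermits) with deadline $2t+2\delta+1$ whose feasibility pins the half-edge shared with travelers to $[2t,2t+2\delta]$. Your back-translation $\lambda'(e)=\lceil\mu'(m_e,v)/2\rceil$ is in fact a small simplification: since the second-half edges of consecutive arcs on any traveler's path differ by at least $2$, rounding up preserves strict increase directly, whereas the paper passes to a Pareto-optimal solution and proves an even/odd parity claim before halving.

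The genuine gap is the undirected case. The lemma -- and the paper's results throughout -- are claimed for both directed and undirected temporal graphs, and you explicitly leave the undirected reduction unresolved, so the proof is incomplete. Your diagnosis of why a chokepoint gadget fails there (one label couples both traversal directions; delaying the guard's entry edge re-opens its escape route) is accurate, but the fix is not a cleverer chokepoint: the paper abandons that idea entirely and uses a pigeonhole argument that keeps the original edge $(u,v)$ intact. For each time-edge $(u,v,t)$ it adds a gadget with $\delta+1$ demand sources $uv_t,\dots,uv_{t+\delta}$, a common sink $v'$, and auxiliary vertices $u_1,\dots,u_\delta,v_1,\dots,v_\delta$ arranged so that there are exactly $\delta+1$ routes to $v'$: one through $(u,v)$ itself and $\delta$ through disjoint bypass pairs $u_j,v_j$. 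The $\delta+1$ demands $(uv_i,v',i+1)$ for $i\in[t,t+\delta]$ have staggered deadlines ensuring no single route can serve two of them, so by pigeonhole some demand must be routed through $(u,v)$, which forces $\lambda'(u,v)=i\in[t,t+\delta]$ exactly. Because $(u,v)$ is never split into directed lanes, the decoupling problem you worried about simply does not arise. Your directed half can stand as written; this pigeonhole gadget is what is missing for the undirected half.
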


\begin{proof}
    We require different reductions for directed and undirected graphs. In both cases, substitute a gadget in place of each edge in the original instance, and increase the lifetime of the instance. Both constructions preserve planarity, and the \delaybetter-instance has final lifetime at most $2 T_{\max} + 2 \delta + 1$ (directed) or $T_{\max} + \delta + 2$ (undirected), where $T_{\max}$ is the final lifetime of the $\delta$-\textsc{DelayBetter} instance.

    We first deal with the undirected case. We begin with a $\delta$-\delaybetter ~instance  $((G,\lambda), D, \delta)$ having the property that every time is at time $3$ or later (if necessary, this can be achieved by uniformly incrementing all times in the demands and in the temporal assignment by $2$).  
    We then create, for every time-edge $(u,v,t)$, a gadget on $3\delta+3$ new vertices $\{uv_t, \ldots, uv_{t+\delta}, u_1, \ldots, u_\delta, v_1, \ldots, v_\delta, u', v'\}$ and $\delta+1$ new demands  $\{(uv_i, v',i+1) : i \in [t, t+\delta]\}$, as shown in \cref{fig:undir-equiv}.
    
    \begin{figure}[!ht]
        \centering
        \includegraphics[page=20, width=\linewidth]{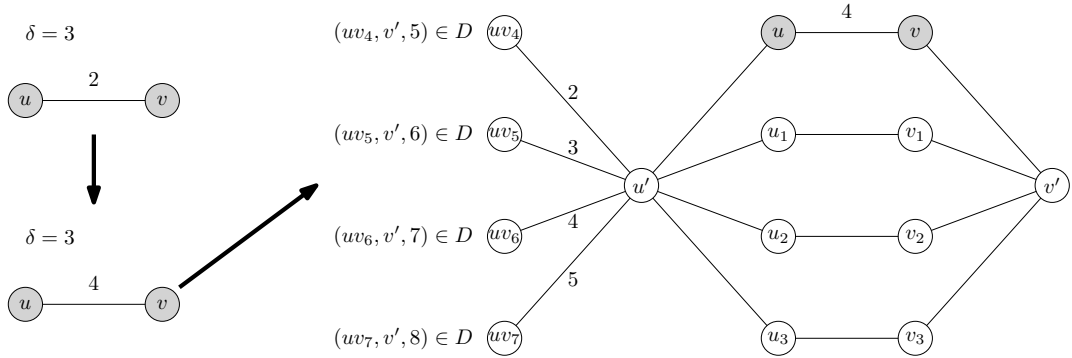}
        \caption{A sketch of our reduction for undirected temporal graphs for a time-edge $(u,v,2)$ in an instance with $\delta=3$. For readability, edges assigned time $1$ in the output instance are unlabeled.}
        \label{fig:undir-equiv}
    \end{figure}
    
    Now each of the $\delta+1$ demands into $v'$ must be routed through a different path. Because there are $\delta+1$ possible paths in total, some demand $(uv_i,v',t)$ must be routed through the edge $(u,v)$ -- and this entails that $\lambda'(u,v)=i$, yielding the desired result since $i\in[t, t+\delta]$ by construction.

    We now give the proof for the directed case.
    Given an instance $(\tgraph=(G,\lambda), D, \delta)$ of $\delta$-\delaybetter, we produce an instance $(\tgraph'=(G',\lambda'), D')$ of \delaybetter ~as follows. (For this proof, we use $\lambda'$ to refer to the initial assignment of the new instance, not the delaying of $\lambda$.)
    We first include $\{(u,v,2t) | (u,v,t) \in D\}$ as demands, which we call \emph{travelers}. 
    Next, we replace every time-edge $(u,v)$ at time $t$ with the gadget pictured in Figure \ref{fig:dir-equiv}, and add the demand $(u',v',2t+2\delta+1)$. We call demands introduced in this step \emph{hermits}, the edge $(u',u)$ that hermit's \emph{trailhead}, and the edge $(u,uv)$ (resp. $(uv,v)$) a \emph{first-half} (resp. \emph{second-half}) edge. This concludes the construction. 

    \begin{figure}
        \centering
        \includegraphics[page=2, width=\linewidth]{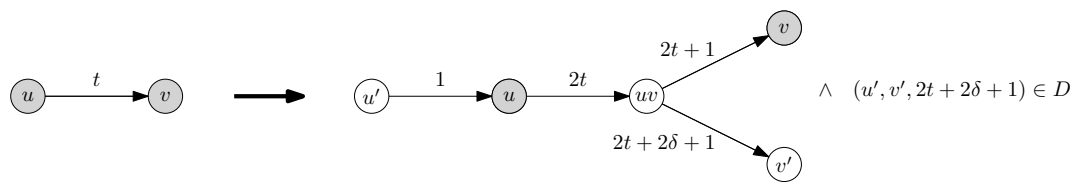}
        \caption{A sketch of our reduction for directed temporal graphs.}
        \label{fig:dir-equiv}
    \end{figure}

    Clearly, if the \textsc{$\delta$-DB} instance reduced from was a yes-instance, then the \textsc{DelayBetter} instance obtained is also a yes-instance: whenever some edge $(u,v)$ is delayed by some amount $x$ in the original instance, delay both $(u,uv)$ and $(uv,v)$ by $2x$. It remains to show the converse.

    Let $\lambda^*$ be a solution to our modified problem with a pareto-optimal time-assignment of the edges (that is, one such that there is no other solution whose time-labels are all strictly smaller or equal to those under $\lambda^*$). 

    \begin{restatable}[$\ast$]{claim}{earlyHermitsClaim}\label{clm:hermits-leave-early}
        Hermits leave early: if $e$ is a hermit's trailhead, then $\lambda^*(e)=1$.
    \end{restatable}


    \begin{claim}\label{clm:even-and-odd}
        Under $\lambda^*$, every first-half edge (resp. second-half edge) is assigned an even (resp. odd) time.
    \end{claim}

    \begin{claimproof}
        Suppose otherwise. We must deal with two cases:
        
        We deal first with the case where the earliest edge violating this claim is a first-half at an odd time. Let $(u,uv)$ be the earliest first-half edge assigned an odd time (say, $t'$) under $\lambda^*$. By pareto-optimality of $\lambda^*$ is must be that assigning time $t'-1$ to the edge $(u,uv)$ would stop some demand from being satisfied. This demand cannot be a hermit because of \cref{clm:hermits-leave-early} -- so there must be some traveler arriving at $u$ at time $t'-1$, a contradiction since our premise for this case was that $(u,uv)$ was the earliest edge violating the claim.
    
        Suppose instead that the earliest offending edge is a second-half edge $(wu,u)$ assigned an even time (say, $t'-1$). We again quickly find that this can only be due to the first-half edge $(w,wu)$ being used by a traveler at time $t'-2$ - again reaching a contradiction, since this is a strictly earlier odd time assigned to a first-half edge.
    \end{claimproof}
    
    \begin{restatable}[$\ast$]{claim}{maxDelayTwoDeltaClaim}\label{clm:maxdelay-2delta}
        For each time-edge $(u,v,t)$ in the original instance, $\lambda^*(u,uv)\in [2t,2t+2\delta]$.
    \end{restatable}
    
    \cref{clm:even-and-odd} allows us to recover a time-labeling for our initial $\delta$-\textsc{DelayBetter} instance by assigning to each the edge $(u,v)$  the time $\frac{(u,uv)}{2}$ while preserving the temporal paths of travelers. \cref{clm:maxdelay-2delta} entails that this time-labeling does not delay any edge by more than $\delta$, and the result follows.
\end{proof}

\begin{restatable}[$\ast$]{lemma}{polyTmaxLemma}\label{lem:poly-tmax}
    An instance of \textsc{DelayBetter} or \textsc{Path DB} (resp. $\delta$-\textsc{DelayBetter} or $\delta$-\textsc{Path DB}) may be reduced in polynomial time to an instance of the same problem with $T_{\max} \in \mathrm{poly}(n)$ (resp. $T_{\max} \in\mathrm{poly}(n+\delta)$). 
\end{restatable}

\begin{restatable}[$\ast$]{lemma}{inNPLemma}\label{lem:in-np}
    \textsc{($\delta$-)DelayBetter} is contained in NP.
\end{restatable}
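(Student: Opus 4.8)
The plan is to exhibit a polynomial-size certificate together with a polynomial-time verification procedure. The natural certificate for a yes-instance is the delaying $\lambda'$ itself, presented as the list of final time-labels $\lambda'(e)$ for each edge $e \in E(G)$.

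First I would argue that this certificate has polynomial size. A priori the delays could be arbitrarily large, but by the earlier Observation we may assume \wlogsm that $\lambda'(e) \le \tmax$ for every edge: any edge delayed beyond $\tmax$ may instead be set to $\tmax$ (or left unchanged) without affecting any demand, since it is used by no passenger. As $\tmax$ is the largest arrival time occurring in $D$, its encoding length is at most that of the input, so each of the $|E(G)|$ values $\lambda'(e)$ is encodable in polynomially many bits. (Alternatively, \cref{lem:poly-tmax} lets us assume $\tmax \in \mathrm{poly}(n)$ outright.) Hence the certificate is of polynomial size.

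Next I would describe verification, which splits into two parts. The verifier first checks that $\lambda'$ is a valid delaying, i.e. that $\lambda'(e) \ge \lambda(e)$ for every edge $e$; in the \textsc{$\delta$-DB} case it additionally checks $\lambda'(e) \le \lambda(e) + \delta$. These amount to $O(|E(G)|)$ comparisons of polynomial-size integers. Second, for each demand $(u,v,t) \in D$, the verifier must confirm the existence of a strict temporal path from $u$ to $v$ arriving by time $t$ in $(G, \lambda')$. This is precisely a temporal reachability query, which is answerable in polynomial time by computing earliest arrival times from $u$ (for instance, by processing the time-edges in nondecreasing order of time-label and propagating reachability). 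Repeating this for each of the $|D|$ demands, and checking that every computed earliest arrival time at the respective target is at most the required $t$, completes the verification in polynomial time.

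The only genuine subtlety, and hence the main obstacle, is ensuring the certificate is of \emph{polynomial} size rather than merely finite: without the bound $\lambda'(e) \le \tmax$ the delays could in principle be astronomically large, breaking the size guarantee. This is exactly what the Observation (or \cref{lem:poly-tmax}) rules out, after which both the witness and the verification are routine.
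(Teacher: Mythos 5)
Your proof is correct and follows essentially the same approach as the paper's: the certificate is the delaying $\lambda'$ itself, verified by checking $\lambda(e) \le \lambda'(e)$ (and the $\delta$-bound where applicable) edge by edge, and then computing earliest arrival times to confirm each demand. Your explicit argument that the certificate has polynomial size (via the Observation bounding delays by $\tmax$) is a point the paper leaves implicit, and you also state the $\delta$-check in the correct direction, $\lambda'(e) \le \lambda(e) + \delta$, where the paper's appendix contains a typo with the inequality reversed.
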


\section{Tractability Results}

We begin with a small positive result which can be obtained easily from prior work. 
\begin{lemma}\label{lem:single-source-poly}
    \textsc{DelayBetter} is solvable in polynomial time when $\lambda$ is the constant function $\bm 1$ and all demands in $D$ have the same source.
\end{lemma}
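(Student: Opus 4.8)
The plan is to reduce the single-source case of \textsc{DelayBetter} to a single shortest-path-tree computation, exploiting that a delaying may only increase time-labels. Let $s$ be the common source of all demands. For each vertex $v$, define $\mathrm{arr}(v)$ to be the earliest arrival time of any temporal path from $s$ to $v$, minimized over all delayings $\lambda'$ of $\lambda$. I will show that $\mathrm{arr}$ is computable in polynomial time and, crucially, that a single delaying realizes all of these earliest arrival times simultaneously; the answer is then YES if and only if $\mathrm{arr}(v)\le t$ for every demand $(s,v,t)\in D$.

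First I would establish the recurrence
$$\mathrm{arr}(s)=0, \qquad \mathrm{arr}(v) = \min_{(a,v)\in E(G)} \max\bigl(\lambda(a,v),\,\mathrm{arr}(a)+1\bigr).$$
The intuition is that, having reached $a$ at time $\mathrm{arr}(a)$, the earliest we can cross a chosen edge $(a,v)$ is $\lambda(a,v)$ when this already exceeds $\mathrm{arr}(a)$, and otherwise $\mathrm{arr}(a)+1$ after delaying the edge (delaying never lets us cross earlier than $\lambda(a,v)$). Since $\max(\lambda(a,v),\mathrm{arr}(a)+1)>\mathrm{arr}(a)$, every relaxation strictly increases the value over its predecessor, so the recurrence is of Dijkstra type and is solved by a single run in time $O(|E(G)|\log|V(G)|)$ (alternatively, by \cref{lem:poly-tmax} one may assume $\tmax\in\mathrm{poly}(n)$ and run a time-expanded BFS). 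A routine induction on path length then shows $\mathrm{arr}(v)$ is a genuine lower bound: under \emph{any} delaying $\lambda'\ge\lambda$, every temporal path from $s$ to a vertex $w$ arrives no earlier than $\mathrm{arr}(w)$, because each traversed label satisfies $\lambda'(a,w)\ge\lambda(a,w)$ and is strictly larger than the (inductively lower-bounded) arrival at $a$.

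The crux is to exhibit one delaying attaining $\mathrm{arr}(v)$ for \emph{all} $v$ at once, since delays are global and distinct demands may share edges. Here I would use the predecessor tree $T$ produced by the recurrence: assign each tree edge $(a,v)$ (with $a$ the parent of $v$) the label $\lambda'(a,v)=\mathrm{arr}(v)=\max(\lambda(a,v),\mathrm{arr}(a)+1)\ge\lambda(a,v)$, and leave all other edges unchanged. Along the $s$–$v$ path in $T$ the values $\mathrm{arr}(\cdot)$ increase by at least $1$ at each step, so the labels are strictly increasing and the path is a valid temporal path arriving at $v$ at time exactly $\mathrm{arr}(v)$. Because each tree edge is used in a single direction and at a single time, the paths serving different demands never conflict over the global label of a shared edge; this consistency is exactly what the common source buys us, and it is the step I expect to require the most care — in particular, arguing it uniformly for the directed and undirected settings (an undirected tree edge is still only ever traversed away from the root $s$). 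Combining the lower bound with this realizing delaying shows $\mathrm{arr}(v)$ is precisely the best achievable arrival time, so all demands are simultaneously satisfiable if and only if $\mathrm{arr}(v)\le t$ for each $(s,v,t)\in D$, which yields the claimed polynomial-time decision procedure.
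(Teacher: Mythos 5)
Your proposal is correct and is essentially the paper's approach: the paper simply invokes the One Source Reach Fast algorithm of Deligkas et al.\ \cite{deligkas_minimizing_2023}, which computes exactly your quantities $\mathrm{arr}(v)$ together with a single delaying realizing all of them simultaneously, and then compares them against the demanded arrival times. Your Dijkstra-style recurrence, lower-bound induction, and predecessor-tree construction (including the observation that each tree edge is only ever traversed away from the source, covering the undirected case) amount to a self-contained re-derivation of that cited algorithm rather than a different route.
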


\begin{proof}
    We use the One Source Reach Fast algorithm from \cite{deligkas_minimizing_2023}: They show that the time-assignment of their algorithm computes, for a given source $v \in V$ and every remaining vertex $u \in V$, the individual minimum time that $v$ needs to reach $u$. 
    If this computed minimum time is at most our demanded arrival time for all demands $(v,u,t) \in D$, then we have a YES-instance, otherwise we have a NO-instance.
 \end{proof}

We now turn to the case where passenger demands fully prescribe the path they must be routed along, establishing tractability through a linear programming argument.

\begin{theorem} \label{thm: path db is in p}
\textsc{Path DelayBetter} and $\delta$-\textsc{Path DelayBetter} are both in P.
\end{theorem}

\begin{proof}
    Let $((G, \lambda), D)$ be an instance of \textsc{Path DelayBetter} (or, let $((G, \lambda), D, \delta)$ be an instance of $\delta$-\textsc{Path-DelayBetter} -- the proof differs only in a few details). 

    We begin by introducing some notation. Our proof is for directed and undirected inputs -- we shall use $uv$ to mean the edge $(u,v)$, but in the undirected case $uv=vu$ whereas in the directed case these are $uv \ne vu$.
    For a demand $d \in D$, we denote by $d_P$ the specified static path in $G$ from $d_s$ to $d_z$, and $d_f$ the final edge of $d_P$, which is incident to $d_z$ and must be at time $d_t$ or earlier to satisfy the demand.
    We also use $t_{uv}$ as shorthand for $\lambda(u,v)$, and $t'_{uv}$ for $\lambda'(u,v)$. Lastly, we 
    define the relation $(u,v) \prec (v,w)$, to be true if and only if $(u,v)$ immediately precedes $(v,w)$ in the path $d_P$ for some $d \in D$.

    Consider the following linear program:
    \begin{align}
    &\mathrm{maximize} \sum_{d\in D} d_t - t_{d_f}' \text{ , subject to}\\
	&t_{uv} \le t'_{uv} \text{ for each } (u,v)\in E(G) \label{eq:delay}\\
	&t'_{uv} \le  t_{uv} + \delta \text{ for each } (u,v)\in E(G) \text{ (only for $\delta$-\textsc{Path-DB})} \label{eq:delta-delay}\\
	&t_{uv}' \le t'_{vw}+1 \text{ for each pair of edges $uv$ and $vw$ such that $uv \prec vw$} \label{eq:strict-path}\\
	&t_{d_f}' \le d_t \label{eq:arrival}\text{ for each demand $d$}
    \end{align}
    
    This LP has $\{t'_{uv}: (u,v) \in E(G)\}$ as its set of \emph{unknown variables}. The variables $\{t_{uv} : (u,v) \in E(G)\} \cup \{d_t : d \in D\}$ correspond to given integers fully specified by the \textsc{Path-DB} instance $((G,\lambda), D)$ (and $d_f$ likewise refers to a specific edge of $G$).

    \begin{restatable}[$\ast$]{claim}{integralLPClaim}\label{clm:integral-lp}
        The LP is integral. Meaning: at least one optimal solution of the LP assigns integers to all of its unknown variables. Moreover, an integral solution may be recovered from a non-integral solution in polynomial time.
    \end{restatable}
    
    Since linear programs are solvable in polynomial time \cite{karmarkay1984linearprog}, we may first solve the LP and then (if the solution is not already integral) apply Claim \ref{clm:integral-lp} to recover an integral solution.
    We note here that a modification of Kahn's algorithm \cite{kahn_topological_1962} for topological sorting may be used to compute a solution to this particular LP directly and more efficiently. A detailed proof would be quite technical and incongruous with the rest of the paper, so has been omitted.

    An integral solution to this LP fully specifies a delaying $\lambda'$ satisfying the ($\delta$-)\textsc{Path DB} instance.
    Note that: $\lambda'$ is indeed a ($\delta$-)delaying of $\lambda$ (due to \cref{eq:delay,eq:delta-delay}); enables strict temporal paths along each path specified in $D$ (due to \cref{eq:strict-path}); and that each of these paths reaches the destination vertex by the arrival time prescribed (due to \cref{eq:arrival}). Conversely, it should be clear that any delaying $\lambda'$ satisfying the ($\delta$-)\textsc{Path DB} instance specifies a (not necessarily optimal) solution to the LP. In fact, the LP allows us not only to \emph{decide} ($\delta$-)\textsc{Path DB}, but more strongly to solve its optimization variant.
\end{proof}

Since trees are characterized by any pair $(u,v)$ being connected by a unique (static) path, we obtain the following corollary:

\begin{corollary} \label{cor: db on trees is in p}
    ($\delta$-)\textsc{DelayBetter} is in P when the underlying graph $\footprint$ is a (directed)~tree.
\end{corollary}

Next, we are able to extend this result to ``tree-like'' graphs, by parameterizing by the size of the instance's feedback edge set.

\begin{theorem}\label{thm:fpt-fes}
    On directed (reps. undirected) temporal graphs, with $|FES(\mathcal{G}_\downarrow)|=\fes$, ($\delta$-)\textsc{DelayBetter} is solvable in time $O(\fes! \cdot 2^{\fes\cdot|D|} \cdot \mathrm{poly}(n))$ (resp. $O(\fes! \cdot 3^{\fes\cdot|D|}\cdot\mathrm{poly}(n))$).
\end{theorem}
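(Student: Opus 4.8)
The plan is to reduce to \textsc{Path DB} (\cref{thm: path db is in p}) by guessing, for each demand, how it interacts with the feedback edges. Let $S = \{e_1, \ldots, e_\fes\}$ be a feedback edge set of $\footprint$ of size $\fes$, computable in polynomial time (e.g.\ the non-tree edges of a spanning forest); deleting $S$ from $\footprint$ leaves a forest $F$ in which every pair of vertices is joined by at most one path. The key observation is that a simple path in $\footprint$ from $d_s$ to $d_z$ is completely determined once we know which edges of $S$ it traverses, in what order, and (in the undirected case) in which direction: such a path alternates between feedback edges and the unique $F$-paths joining their endpoints. Hence fixing this data for every demand turns the instance into a \textsc{Path DB} instance, which we solve in polynomial time.

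To keep the number of guesses low, I would separate them into a single global part and a cheap per-demand part. First, guess a total order $\pi$ of the $\fes$ feedback edges; there are $\fes!$ such orders, and $\pi$ is meant to record the relative order of the final time-labels $\lambda'(e_1), \ldots, \lambda'(e_\fes)$ in the sought solution. The point of fixing $\pi$ globally is that, along any temporal path, time-labels strictly increase, so the feedback edges used by any one demand must appear in $\pi$-order; consequently, for each demand it then suffices to guess only the \emph{subset} of feedback edges it uses (and each used edge's direction, in the undirected case), the order being forced by $\pi$. This gives $2^{\fes}$ choices per demand in the directed case (used/unused) and $3^{\fes}$ in the undirected case (unused/forward/backward), hence $2^{\fes\cdot|D|}$ (resp.\ $3^{\fes\cdot|D|}$) choices over all demands, for a total of $O(\fes!\cdot 2^{\fes\cdot|D|})$ (resp.\ $O(\fes!\cdot 3^{\fes\cdot|D|})$) guesses. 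For each guess I would reconstruct every demand's footprint path by stitching its chosen feedback edges (ordered by $\pi$) together with the forced intermediate $F$-paths, discard the guess if any reconstructed walk fails to be a simple $d_s$--$d_z$ path, and otherwise invoke \cref{thm: path db is in p} on the resulting \textsc{Path DB} instance; we answer \textsc{yes} iff some guess yields a feasible \textsc{Path DB} instance.

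For correctness, soundness is immediate, since a feasible \textsc{Path DB} solution is a delaying realizing all demands. For completeness, I would argue that any feasible delaying $\lambda'$ induces a successful guess: sort the feedback edges by their $\lambda'$-value (breaking ties consistently) to obtain an order $\pi^\ast$, and for each demand read off the feedback edges, directions, and (by strict monotonicity of time along temporal paths) the $\pi^\ast$-consistent order it actually uses. Reconstructing from this data recovers exactly that demand's original path, so the corresponding \textsc{Path DB} instance is feasible and is among those enumerated.

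The main obstacle to watch is the completeness argument's reliance on a single consistent global order: one must verify that in a feasible solution no two demands force contradictory orderings of a pair of feedback edges. This holds precisely because strict monotonicity makes every per-demand ordering a refinement of the $\lambda'$-order, so $\pi^\ast$ simultaneously realizes all of them; it is exactly this consistency that lets the $\fes!$ ordering factor be \emph{shared} across all demands rather than paid per demand (which would cost $(\fes!)^{|D|}$ and ruin the bound). The remaining points are routine: computing $S$ and the forced $F$-paths in polynomial time, and ensuring the numbers stay polynomially bounded so that each \textsc{Path DB} solve runs in $\mathrm{poly}(n)$, invoking \cref{lem:poly-tmax} if necessary.
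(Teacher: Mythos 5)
Your proposal is correct and follows essentially the same approach as the paper's proof: fix a feedback edge set, enumerate the $\fes!$ global orderings of its edges and, per demand, the $2^{\fes}$ (directed) or $3^{\fes}$ (undirected) choices of used feedback edges with directions, reconstruct each path via the unique tree paths between consecutive feedback edges, and solve the resulting \textsc{Path DB} instance using \cref{thm: path db is in p}. Your write-up is in fact somewhat more explicit than the paper's on the completeness side (why a single shared ordering $\pi^\ast$, justified by strict monotonicity of temporal paths, suffices across all demands), but the underlying argument is identical.
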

\begin{proof}
    The proofs for directed and undirected graphs differ only in small details, and those for for $\delta$-DB and \textsc{DelayBetter} are identical (until we apply \cref{thm: path db is in p}).
    
    Let $E'$ be a feedback edge set of (the undirected version of) $\footprint$ of size $\fes$. We iterate over each of the $\fes!$ possible orderings $(e_1, e_2, ... e_{\fes})$ of $E'$, and require that $t_{e_1} \leq t_{e_2} \leq ... \leq t_{e_{\fes}}$. (Note that if $T_{\max}$ is small and $\fes$ is large, we may prefer to iterate over all $(T_{\max})^\fes$ assignments and obtain an ordering from those.)

    In any solution, each demand $d\in D$ is satisfied by a strict temporal path from $d_u$ to $d_v$ using some subset of the edges of $E'$. In the directed case, specifying this subset (together with the ordering fixed earlier) fully specifies the path from $d_u$ to $d_v$; in the undirected case, it is also necessary to specify the direction taken for each edge. The journey from one edge in the subset to the next is uniquely determined due to the fact that it can only use the edges of the spanning tree obtained by removing $E'$ from $\tgraph$.

    For directed graphs, this means there are at most $2^{\fes}$ possible paths for each demand (an edge is either chosen or not), and thus $2^{\fes \cdot |D|}$ for all demands.
    For undirected graphs, we get $3^{\fes}$ possible paths per demand (an edge $(u,v)\in E'$ is either traversed from $u$ to $v$, from $v$ to $u$, or not at all), and thus $3^{\fes \cdot |D|}$ for all demands. 
    For each ordering of $E'$ and collection of subsets of $E'$, there is a corresponding instance of ($\delta$-)\textsc{Path DB}.
    
    In total, it is sufficient to solve $\fes! \cdot 2^{\fes \cdot |D|}$ such instances of ($\delta$-)\textsc{Path DB} for directed graphs, and $\fes! \cdot 3^{\fes \cdot |D|}$ instances of ($\delta$-)\textsc{Path DB} for undirected graphs.
    Since ($\delta$-)\textsc{Path DB} is solvable in polynomial time by Theorem \ref{thm: path db is in p}, we obtain the desired result.
 \end{proof}

\section{Hardness results}

Our first two hardness results are in the restrictive setting wherein $T_{\max}=2$ and the initial temporal assignment is the constant function $\bf 1$. 
In this setting, the problems \delaybetter ~and $\delta$-\delaybetter ~essentially ask only whether there \emph{exists} any $\lambda$ satisfying our passenger demands; any such $\lambda$ can be assumed \wlogsm to have lifetime $2$, and could be obtained by delaying all time-edges by at most $1$ - meaning our results hold for any $\delta \ge 1$.

\begin{theorem}\label{thm:undir-np-c-lifetime2}
    On undirected graphs, \delaybetter ~(and $\delta$-\delaybetter ~with any $\delta\ge 1$) is NP-complete even restricted to instances where $T_{\max}=2$, the initial temporal assignment is the constant function $\bf 1$, and the $\footprint$ has diameter 6.
\end{theorem}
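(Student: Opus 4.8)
The plan is to prove NP-completeness by reduction from a variant of \textsc{NAE 3SAT} (Not-All-Equal 3SAT), which the paper's later discussion explicitly references. Membership in NP follows immediately from \cref{lem:in-np}, so the work is entirely in establishing NP-hardness. The key observation driving the reduction is the one already highlighted in the text: when $T_{\max}=2$ and the initial assignment is the constant $\mathbf{1}$, each edge is effectively a binary variable — it either stays at time $1$ or is delayed to time $2$. A temporal path of length two from $u$ to $v$ through a middle vertex then exists \emph{if and only if} the first edge is at time $1$ and the second is at time $2$ (since strict temporal paths require strictly increasing labels). This binary, two-layer structure is exactly what one needs to encode Boolean constraints.

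The construction I would carry out is as follows. For each variable $x_i$ of the NAE-SAT formula, I would introduce an edge (or small gadget) whose delay state encodes the truth value: undelayed ($\lambda'=1$) meaning \textbf{false}, delayed ($\lambda'=2$) meaning \textbf{true}, say. The challenge is that a single edge must consistently feed into every clause containing its literal, so I would build a \emph{variable gadget} that fans out this single binary choice to all occurrences, using demands that force all copies of the variable-edge to share the same delay state. For each clause $(\ell_1 \vee \ell_2 \vee \ell_3)$ in the NAE sense, I would build a \emph{clause gadget} with a source and a sink connected through the three literal-edges, adding a demand $(s_C, z_C, 2)$ that is satisfiable precisely when the three literals are \emph{not all equal} — i.e. at least one literal-edge is delayed and at least one is not. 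The NAE semantics is natural here because a length-two temporal path needs one low-time edge followed by one high-time edge, so forcing a path to exist forces a genuine split among the incident edge-labels.

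First I would fix the precise gadgets and verify the two directions of correctness: (forward) a satisfying NAE-assignment yields a delaying satisfying all demands, and (backward) any satisfying delaying induces a consistent, satisfying NAE-assignment. Second I would handle the variable-consistency enforcement — ensuring every occurrence of $x_i$ uses the same delay state — most likely via a chain or cycle of demands that propagates the choice; this propagation is where I would be most careful, since the demands must force equality without accidentally allowing a cheating delaying. Third I would confirm the structural constraints claimed in the statement: that the footprint has diameter $6$ (which constrains how far apart gadget vertices may lie and likely dictates a compact, star-like or layered layout with a small number of shared hub vertices) and that $T_{\max}=2$ with initial assignment $\mathbf{1}$ is respected throughout.

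The main obstacle I anticipate is simultaneously (i) enforcing variable consistency across all clause occurrences and (ii) keeping the footprint diameter bounded by exactly $6$. These two requirements pull in opposite directions: consistency enforcement naturally wants long propagation chains (which blow up the diameter), whereas a small diameter wants everything routed through a few central hubs (which risks unwanted temporal paths between unrelated gadgets, corrupting the NAE encoding). I expect the resolution is to route consistency demands through a shared hub structure while carefully choosing initial labels and arrival-time bounds so that no spurious short path can satisfy a clause demand incorrectly. Choosing the \textsc{NAE 3SAT} variant judiciously — for instance one where each variable appears a bounded number of times, which also supports the bounded-degree claim made elsewhere in the paper — should make both the consistency gadget and the diameter bound manageable.
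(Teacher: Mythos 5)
There is a genuine gap: your proposal identifies the right starting point (reduction from an NAE-3SAT variant, edges as binary variables, the fact that a strict length-2 temporal path needs its first edge at time $1$ and its second at time $2$), but it never actually constructs the gadgets — both the variable-consistency mechanism and the clause gadget are left as open subproblems, and the one concrete mechanism you do propose does not work as stated. A demand $(s_C, z_C, 2)$ cannot be routed "through the three literal-edges": a temporal path arriving by time $2$ has at most two edges, so such a demand expresses an OR, over middle vertices $m$, of the conjunction ``$(s_C,m)$ at time $1$ and $(m,z_C)$ at time $2$''. To make that OR express not-all-equal, the labels of the edges incident to the clause gadget would have to mirror the truth values of the variables — which is precisely the copy-consistency problem you flag as your main obstacle and explicitly leave unresolved. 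As written, the proposal describes the difficulty rather than overcoming it.

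The idea you are missing is the paper's hub construction, which makes the consistency problem disappear rather than solving it by propagation. The paper (reducing from \textsc{Positive Not-All-Equal Exactly 3SAT}, where all literals are positive — this choice matters, since the construction has no mechanism for negation) introduces two global vertices $T$ and $F$. Each variable $x$ is represented by a \emph{single} vertex $m_x$ adjacent to both $T$ and $F$, and auxiliary vertices/demands ($s_x$, $t_x$, and pendant vertices $T'$, $F'$) force $\lambda'(m_x,T) \ne \lambda'(m_x,F)$; thus the truth value of $x$ is stored in exactly one pair of edges, and no copies exist to synchronize. Each clause $c$ becomes a vertex adjacent to $m_x$ for every $x \in c$, and — this is the key move — the two clause demands \emph{originate at the hubs}: a 2-demand from $T$ to $c$ and a 2-demand from $F$ to $c$. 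The first is satisfiable iff some $x \in c$ has $\lambda'(m_x,T)=1$ (i.e., $x$ is true), the second iff some $y \in c$ has $\lambda'(m_y,F)=1$ (i.e., $y$ is false); together they are satisfiable iff the clause is not-all-equal. Because every clause demand reuses the unique edges $(m_x,T)$, $(m_x,F)$, no fan-out, chain, or cycle of demands is needed, and since every gadget vertex is within distance $3$ of $\{T,F\}$, the diameter-6 bound is automatic rather than in tension with consistency. (Also note this theorem claims diameter $6$, not bounded degree — $m_x$ has degree growing with the number of clauses containing $x$; bounded degree is only achieved in the paper's separate planar construction.)
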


\begin{proof}
    Our reduction is from \textsc{Positive Not-All-Equal Exactly 3SAT} \cite{antunes2019characterizing}, an NP-complete problem taking as input a formula $\phi$ consisting of triples of variables (which appear only positively). The formula $\phi$ is a yes-instance if there is an assignment to the variables such that every triple contains at least one true variable and at least one false variable.  

    We shall construct a graph $G$ which admits a temporal assignment $\lambda': E(G) \to \{1,2\}$ satisfying all our demands if and only if $\phi$ admits a satisfying assignment. 
    Figure \ref{fig:lifetime-2-undirected} may be of use to the reader in following the proof. Solid (resp. dashed) edges in bold are ones which are necessarily assigned $1$ (resp. $2$) in any temporal assignment $\lambda'$ satisfying all demands.

    \begin{figure}[!ht]
        \centering
        \includegraphics[width=0.8\linewidth,page=17]{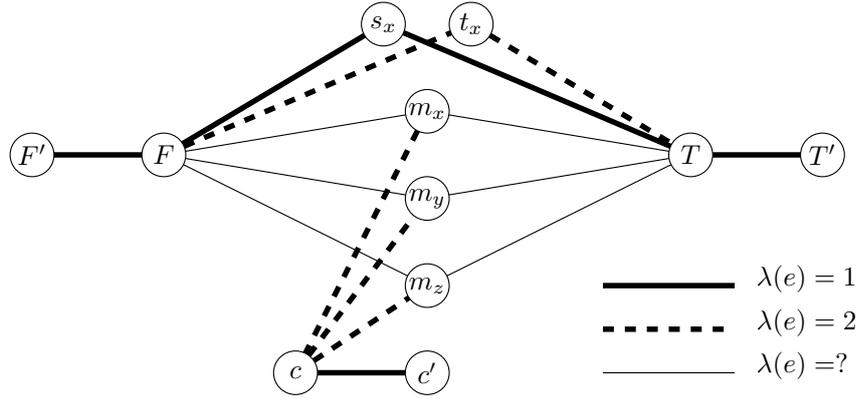}
        \caption{A sketch of our construction. The vertices $s_x, t_x, m_x$ constitute the gadget for a variable $x$, and the vertices $m_x,m_y,m_z,c,c'$ constitute the gadget for a triple $c=\mathrm{nae}(x,y,z)$.}
        \label{fig:lifetime-2-undirected}
    \end{figure}

    We shall refer to the demand $(u,v,t)$ as a $t$-demand from $u$ to $v$.     
    We begin with four special vertices $F,F',T,T'$, with $1$-demands from $F'$ to $F$ and $T'$ to $T$.   
    Then, for each variable $x$ in $\phi$, we introduce vertices $s_x,t_x,m_x$ and edges from each of these to each of $T,F$. We further introduce 1-demands from $s_x$ to each of $T,F$ (enforcing that both edges must be assigned time $1$) and 2-demands from each of $T',F'$ to $t_x$ (enforcing that both of $(T,t_x),(F,t_x)$ must be assigned time $2$). Lastly, we introduce 2-demands from $s_x$ to $m_x$ and from $m_x$ to $t_x$, which together with the previous constraints, guarantees that $\lambda'(m_x,T)\ne\lambda'(m_x,F)$. 
    
    Next, for each triple $c$ in $\phi$, we create vertices $c$ and $c'$ and a 1-demand between these, and connect the vertex $c$ to $m_x$ by an edge if $x$ appears in the triple $c$ and introduce a 2-demand from $c'$ to $m_x$. We also introduce $2$-demands from each of $T$ and $F$ to $c$.
    
    The intention is that assigning $\lambda'(m_x,T)=1$ will correspond to an assignment of \texttt{true} to $x$ in $\phi$, and assigning $\lambda'(m_x,F)=1$ will correspond to an assignment of \texttt{false} to $x$ in $\phi$. Suppose that some $\lambda'$ satisfies all demands. Then the assignment in which variable $x$ is set to \texttt{true} if $\lambda'(m_x,T)=1$ and \texttt{false} otherwise is a satisfying assignment of $\phi$.

    Suppose that $\phi$ has a satisfying assignment $X$. Consider the temporal assignment $\lambda'$ in which $\lambda'(m_x,T)=1$ and $\lambda'(m_x,F)=2$ if $x$ is \texttt{true} under $X$ and $\lambda'(m_x,T)=2$ and $\lambda'(m_x,F)=1$ otherwise (and all other values of $\lambda'$ are as specified in Figure \ref{fig:lifetime-2-undirected}). 
    Under $\lambda'$, every clause $c$ is adjacent to some pair of vertices $m_x,m_y$ such that $x=\texttt{true}$ under $X$ and  $y=\texttt{false}$ under $X$ -- so the 2-demand from $T$ (resp. $F$) to $c$ can be routed through $m_x$ (resp. $m_y$). It is clear that $\lambda'$ satisfies all other demands. 
\end{proof}

Our result for directed graphs requires a slightly different proof:

\begin{theorem}\label{thm:dir-np-c-lifetime2}
    On directed graphs, \delaybetter ~(and $\delta$-\delaybetter ~with any $\delta\ge 1$) is NP-complete even restricted to instances where $T_{\max}=2$ and $G$ has no directed cycles.
\end{theorem}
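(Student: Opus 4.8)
The plan is to reduce from the same \textsc{Positive Not-All-Equal Exactly 3SAT} problem \cite{antunes2019characterizing} used for the undirected case of \cref{thm:undir-np-c-lifetime2}, again working in the regime where the initial assignment is the constant function $\bf 1$ and $T_{\max}=2$, so that a solution is simply a choice of which edges to delay from time $1$ to time $2$. Membership in NP is immediate from \cref{lem:in-np}, so all the work is in the hardness direction. As in the undirected construction I would use global vertices $T$ and $F$ and, for each variable $x$, a vertex $m_x$ carrying its truth value on the two edges joining it to $T$ and $F$; a clause $c=\mathrm{nae}(x,y,z)$ would then be checked by a pair of $2$-demands, one verifying that some literal is true and one verifying that some literal is false, which is exactly the not-all-equal condition.

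The main obstacle—and the reason a genuinely new construction is required—is that naively orienting the undirected gadget fails: there the edge between $m_x$ and $T$ must be traversed \emph{from} $T$ (when a clause demand reads the variable's value) and also \emph{into} $T$ (when the variable-consistency demand forbids the variable from being simultaneously true and false), which would create a directed $2$-cycle and violate acyclicity. The key new idea is to push the consistency check to the source side so that every edge can be oriented consistently. Concretely, I would orient all variable edges as $T\to m_x$ and $F\to m_x$ (time $1$ meaning \enquote{true}, resp. \enquote{false}), all clause edges as $m_x\to Z_c$, and add for each variable a fresh source vertex $X_x$ with edges $X_x\to T$ and $X_x\to F$. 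This yields a topological order in which each $X_x$ precedes $T$ and $F$, which precede every $m_x$, which precede every clause vertex $Z_c$; hence $\footprint$ is acyclic.

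The \enquote{not both true} constraint is then imposed by a single demand $(X_x, m_x, 2)$: its only length-$2$ routes are $X_x\to T\to m_x$ and $X_x\to F\to m_x$, each usable precisely when the corresponding variable edge sits at time $2$, so the demand is satisfiable iff $\lambda(T,m_x)=2$ or $\lambda(F,m_x)=2$, i.e. iff the variable is not set to both values at once. For a clause $c$, the demand $(T,Z_c,2)$ routed $T\to m_x\to Z_c$ is satisfiable iff some literal has $\lambda(T,m_x)=1$ (a true literal), and symmetrically $(F,Z_c,2)$ certifies a false literal; together these two demands enforce the not-all-equal condition on $c$.

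To finish I would verify both directions. For completeness, a satisfying assignment is realized by setting, for each true variable, $\lambda(T,m_x)=1$ and $\lambda(F,m_x)=2$ (and the reverse for false variables) and delaying the relevant $m_x\to Z_c$ edges, which satisfies every demand. For soundness, given any satisfying delaying I would read off $x$ as true iff $\lambda(T,m_x)=1$; the per-variable demand guarantees no variable is counted as both, the $(T,Z_c,2)$ demand supplies a genuinely true literal in each clause, and the $(F,Z_c,2)$ demand a genuinely false one, so distinct witnesses of each parity exist and a valid NAE assignment is recovered. The delicate point I expect to be the crux of the formal write-up is confirming that no unintended temporal path satisfies a demand \enquote{for free}; since every edge lies on the short monotone level structure above, the routes enumerated are the only ones with arrival time at most $2$. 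Finally, as noted before the theorem statement, any satisfying assignment can be taken to delay each edge by at most $1$, so the same construction also establishes hardness of $\delta$-\delaybetter ~for every $\delta\ge1$.
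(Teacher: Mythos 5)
Your proof is correct, and although it reduces from the same problem (\textsc{Positive Not-All-Equal Exactly 3SAT}) in the same regime as the paper (constant-$\bf 1$ initial assignment, $T_{\max}=2$, acyclic footprint, global hubs $T$ and $F$, two demands per clause enforcing the not-all-equal condition), your gadget design is genuinely different. The paper orients clause checking the opposite way: each clause vertex is a \emph{source} with out-edges to per-variable vertices $s_x^T,s_x^F$ and 2-demands from the clause to each of $T$ and $F$, whereas your clause vertices $Z_c$ are \emph{sinks} reached from $T$ and $F$. More substantively, the paper enforces variable consistency with a six-vertex gadget ($s_x,s_x^T,s_x^F,t_x,t_x^T,t_x^F$) and three chained 2-demands, whose soundness rests on a cascade argument (\cref{clm:lifetime-2-dir-nae}): $\lambda(s_x^T,T)=2$ forces the demand from $s_x^T$ to $t_x^T$ through $s_x$, which forces the demand from $s_x$ to $t_x$ through $t_x^F$, which forces the demand from $s_x^F$ to $t_x^F$ through $F$, yielding $\lambda(s_x^F,F)=1$. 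You replace this entire cascade by a single demand $(X_x,m_x,2)$ from a fresh source, which must use one of exactly two length-2 paths (through $T$ or through $F$) and therefore directly forces at least one of $\lambda(T,m_x),\lambda(F,m_x)$ to equal $2$, i.e.\ the variable cannot be read as both true and false. Your construction is smaller (three vertices, five edges and one demand per variable versus six vertices and three demands) and its soundness argument is more local, at no cost I can see: both constructions are DAGs, keep $T_{\max}=2$, and give hardness of $\delta$-\delaybetter~for every $\delta\ge 1$ via \cref{lem:in-np} for membership. The one step you should make fully explicit in a formal write-up is the observation that the two clause demands $(T,Z_c,2)$ and $(F,Z_c,2)$ cannot be witnessed by the same variable (if $\lambda(T,m_x)=1$ and $\lambda(F,m_x)=1$ then the consistency demand for $x$ is unsatisfiable), since this is precisely what makes the read-off assignment not-all-equal; your phrase about ``distinct witnesses of each parity'' gestures at this, and the argument does go through.
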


\begin{proof}
    The reduction is again from \textsc{Positive Not-All-Equal Exactly 3SAT}. Given a formula $\phi$, we construct a directed graph $G$ as follows:
    Then, for each variable $x$ in $\phi$, we introduce six vertices $s_x, s_x^T, s_x^F, t_x, t_x^T, t_x^F$ and connect them as shown in Figure \ref{fig:lifetime-2-directed}. Further, we introduce a vertex $c$ identified with each triple $c$ in $\phi$, and create directed edges from $c$ to $s_x^T$ and from $c$ to $s_x^F$. 
    
    \begin{figure}[!ht]
        \centering
        \includegraphics[width=0.8\linewidth,page=19]{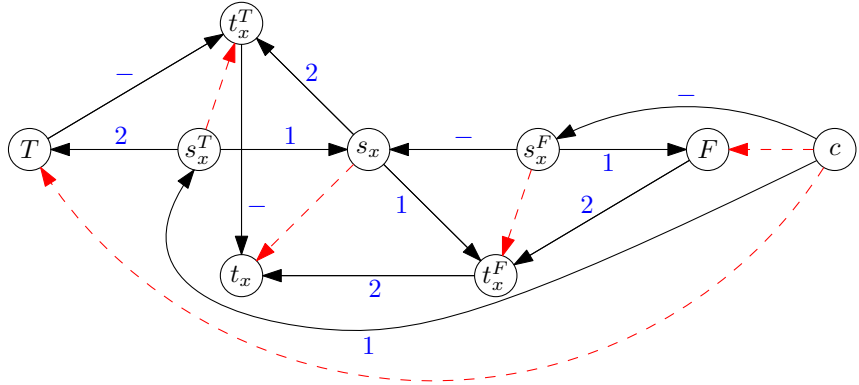}
        \caption{A sketch of our construction showing NP-completeness of \delaybetter ~for digraphs. The vertices $s_x, s_x^T, s_x^F, t_x, t_x^T, t_x^F$ constitute the gadget for a variable $x$, and the vertex $c$ (together with its out-edges) constitutes the gadget for a triple $c\ni x$. Directed edges in $G$ are solid, whereas 2-demands are shown as dashed arrows in red. The temporal assignment shown (in blue) is one corresponding to the assignment $x$=\texttt{true} ($-$ denotes an arbitrary choice).}
        \label{fig:lifetime-2-directed}
    \end{figure}

    We now specify the demands for our instance; for each variable $x$, we have 2-demands from $s_x$ (resp. $s_x^T$, $s_x^F$) to $t_x$ (resp. $t_x^T, s_x^F$), and for each clause $c$ we have 2-demands from $c$ to each of $T$ and $F$. (All of our demands are 2-demands, and these are shown as red dashed arrows in Figure \ref{fig:lifetime-2-directed}.) We let the constant function $\bf 1$ be the initial temporal assignment for our directed graph, and this concludes the construction of our ($\delta$-)\delaybetter ~instance (together with specifying $\delta=1$, if necessary).
    
    \begin{claim}\label{clm:lifetime-2-dir-nae}
        Let $\lambda'$ be any temporal assignment satisfying all demands in our construction. Then $\lambda'(s_x^T,T)=2$ entails $\lambda'(s_x^F,F)=1$, and $\lambda'(s_x^F,F)=2$ entails $\lambda'(s_x^T,T)=1$.
    \end{claim}

    \begin{claimproof}
        Suppose $\lambda'(s_x^T,T)=2$ for some $x$. Since all our demands are satisfied, we have that there must be a temporal path from $s_x^T$ to $t_x^T$ arriving at time 2. Such a path necessarily leaves at time 1 (since the two vertices are at distance 2). Consequently, $\lambda'(s_x^T,s_x)=1$ and $\lambda'(s_x, t_x^T)=2$. Similarly, we now must have that the 2-demand from $s_x$ to $t_x$ is routed through $t_x^F$, entailing that $\lambda'(s_x, t_x^F)=1$ and $\lambda'(t_x^F,t_x)=2$. Applying the same logic a third time, the 2-demand from $s_x^F$ to $t_x^F$ must be routed through $F$, and the desired claim follows. (The other direction is symmetric.)
    \end{claimproof}

    Suppose that some $\lambda'$ satisfies all demands. Consider the truth assignment in which a variable $x$ is set to \texttt{true} if $\lambda'(s_x^T,T)=2$, and \texttt{false} otherwise. Suppose for contradiction that under this truth assignment, some triple $c$ is not satisfied. Then either: (a) all variables in $c$ are \texttt{true} under our truth assignment, and leveraging Claim \ref{clm:lifetime-2-dir-nae}, the vertex $c$ cannot reach the vertex $F$ by time $2$; or, (b), all variables in $c$ are \texttt{false} under our truth assignment, and there $c$ cannot reach the vertex $T$ by time $2$. In either case, some demand is not satisfied and we derive the desired contradiction. 
    
    Now suppose that there is some truth assignment satisfying $\phi$. Consider the temporal assignment $\lambda'$ in which:
    \begin{itemize}
        \item If $x\in c$ and $x$ is \texttt{true} (resp. \texttt{false}) under the assignment, then $\lambda'(c, s_x^T)=1$ (resp. $\lambda'(c, s_x^F)=1$), and
        \item If $x$ is \texttt{true} (resp. \texttt{false}) under the truth assignment, then $\lambda'(s_x^T,T)=2$ (resp. $\lambda'(s_x^F,F)=2$) and temporal assignments to other directed edges in each variable gadget being chosen consistently with the proof of Claim \ref{clm:lifetime-2-dir-nae} to satisfy demands within the variable gadget, as shown in Figure \ref{fig:lifetime-2-directed}.
        \item All other edges are assigned times arbitrarily. 
    \end{itemize}

    Under $\lambda'$, $c$ has a path to $T$ (resp. $F$) through $s_x^T$ (resp. $s_x^F$) if and only if $x\in c$ is assigned \texttt{true} (resp. \texttt{false}). 
    It should be clear that $\lambda'$ satisfies all other demands in our instance by construction, and the result follows.
\end{proof}

Having shown that the instance being a tree yields  tractability in \cref{cor: db on trees is in p}, we consider the case of planar graphs - a well-studied superclass of trees. 

\begin{theorem}\label{thm:planar}
    $\delta$-\delaybetter ~is NP-complete under any combination of the following:
    \begin{itemize}
        \item $G$ is planar and has maximum degree $10$.
        \item Either $G$ is undirected, or $G$ is a directed acyclic graph (DAG).
        \item Either $T_{\max}=19$ and $\tinit=1$ (with any $\delta\ge 19$), or $T_{\max}=19$ and $\delta=10$.
    \end{itemize}
\end{theorem}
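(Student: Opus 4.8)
The plan is to reduce from \marxcol{} (\marxcoloring), which Marx proved NP-complete: given a cubic bipartite planar graph $H$ together with a partial proper $3$-edge-colouring, decide whether it extends to a full proper $3$-edge-colouring. This is the natural source for a planar hardness result, since the reduction behind \cref{thm:undir-np-c-lifetime2} starts from (a variant of) \textsc{NAE 3SAT}, whose planar restriction is in P \cite{moret_planarNAE3SAT_1988} and hence cannot be reused. By K\"{o}nig's edge-colouring theorem every cubic bipartite graph is $3$-edge-colourable, so the only obstruction is compatibility with the precolouring; and the ``all three colours meet at every vertex'' condition of a proper $3$-edge-colouring is exactly the ternary, symmetric constraint expected to survive planarity, in contrast to the binary not-all-equal constraint.

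First I would fix an encoding of the three colours as three distinguished time values and assemble the instance from three kinds of local gadgets. (i)~For each edge $e$ of $H$, a \emph{colour gadget} carrying one designated time-edge whose delayed label is squeezed — by a small bundle of demands, in the style of the value-forcing construction of \cref{lem:equiv} — to take one of exactly three values, the colour of $e$; since there is a single such time-edge per edge of $H$, both endpoints read the same colour, which enforces consistency for free. (ii)~For each vertex $v$ of $H$, incident to $e_1,e_2,e_3$, a \emph{vertex gadget} consisting of three sinks and three demands out of $v$, routed so that the colour gadget of $e_i$, when set to colour $c$, opens a strict temporal path from $v$ to $v$'s $c$-th sink; the three demands at $v$ are then simultaneously satisfiable precisely when $e_1,e_2,e_3$ carry three distinct colours, i.e.\ when the colouring is proper at $v$. (iii)~For each precoloured edge, one demand pinning its colour gadget to the prescribed value. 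Planarity is inherited from $H$ because every gadget attaches locally and is itself planar; for the directed statement the bipartition of $H$ supplies a consistent acyclic orientation (all original edges directed from one side to the other, gadget edges oriented along the temporal order of the construction), while the undirected statement uses the analogous construction, taking care — as in the feedback-edge-set argument — that a temporal path may traverse an edge in either direction.

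Correctness then splits into two routine directions: a proper extension yields a delaying (set each colour gadget to its colour and route each vertex's three demands through its three distinctly coloured incident edges), and conversely any feasible delaying induces a colouring forced to be proper at every vertex and to agree with the precolouring. The quantitative claims come from accounting inside the gadgets: the three colours sit as three time values inside a window of width $10$ (giving $\delta=10$, with $\tinit$ taken as the lowest of the three so that each colour is reachable by a delay of at most $10$), embedded in an overall lifetime $\tmax=19$ that also accommodates the fixed routing times of the sink-paths; the alternative setting takes $\tinit\equiv 1$ with any $\delta\ge 19$, which realises every needed label in $[1,19]$ freely. Each original vertex meets only three edges, and each gadget attaches through a constant number of time-edges, keeping the maximum degree at $10$.

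I expect the main obstacle to be the design of the vertex gadget realising the proper-colouring (three-distinctness) constraint under all restrictions simultaneously: the single colour time-edge of $e=\{u,v\}$ must be legible to \emph{both} endpoint gadgets, the route from a vertex to its three sinks must single out exactly the matching colour via strict increase of temporal labels, and the whole thing must embed without crossings inside the tight lifetime and delay budgets. This is precisely where the planar bipartite structure of the source is exploited — the bipartition orders (and, in the directed case, orients) the two endpoint-reads of each edge in time, preventing the two vertex gadgets that share $e$ from interfering — and it is the step most sensitive to the exact constants $19$, $10$, and the degree bound $10$.
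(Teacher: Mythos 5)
You identify the correct source problem---\marxcoloring~(\marxcol), exactly as the paper does---and the right high-level skeleton: a three-way temporal choice per edge of $H$, per-vertex gadgets enforcing properness, and precolored edges pinned. But everything that actually constitutes the paper's proof is deferred in your proposal. You name ``the design of the vertex gadget realising the proper-colouring constraint under all restrictions simultaneously'' as the expected main obstacle; that obstacle \emph{is} the theorem, and your sketch does not overcome it. In particular, the quantitative claims ($\tmax=19$, $\delta=10$, maximum degree $10$) are asserted to ``come from accounting inside the gadgets,'' but with no gadgets there is nothing to account. Note also that the value-forcing bundle you import from \cref{lem:equiv} pins an edge's label to a contiguous window $[t,t+\delta]$, not to one of three discrete values, so even your colour gadget is not obtained from the cited machinery; the paper instead encodes the three colours as three disjoint \emph{time windows} (bold time-edges at times $7/8$, $10/11$, $13/14$ on the edge-gadget) through which the single traveler for $(u,v)$ may pass.

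More concretely, the one mechanism you do specify would fail on planarity. You require that the demand from $v$ to its $c$-th sink be routable through \emph{whichever} incident edge carries colour $c$; since a priori any of $e_1,e_2,e_3$ may carry any colour, the footprint must contain a potential route from each of the three edge gadgets at $v$ to each of the three sinks. If these nine routes are internally disjoint, contracting each edge gadget and each sink yields a $K_{3,3}$ minor at every vertex of $H$, destroying planarity. Avoiding this forces the routes to merge at a hub and be disambiguated purely by timing---but then your clean logical characterization (``the three demands at $v$ are simultaneously satisfiable precisely when the colours are distinct'') no longer follows from the structure and must be re-proved from the timing, which is precisely the nontrivial content of the paper's construction. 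The paper resolves it with a different logical design: the vertex $v$ itself is the hub; each vertex gadget has per-\emph{neighbour} layered columns (not per-colour sinks); hermit demands force the spoke edges into narrow time ranges; and properness is then derived by a counting argument (\cref{clm:a-gadg-thirds}, \cref{clm:b-gadg-thirds}) showing that at most one traveler per time window can leave each $A$-gadget and at least one per window must arrive at each $B$-gadget, whence exactly one incident edge of each colour at every vertex. Without this---or some equivalent resolution of the $K_{3,3}$ obstruction---your vertex gadget does not exist as described, and the reduction is not established.
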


\begin{proof}
    Our reduction is from \marxcoloring ~(\marxcol). That problem asks, given an undirected graph $G$ (which is planar, bipartite, and cubic) and a precoloring of its edges $P:E(G) \to \{R,G,B,U\}$ (indicating red, green, blue, and uncolored edges respectively) whether there is a proper edge-coloring $C:E(G) \to \{R,G,B\}$ of $G$ such that $P(e)\in \{R,G,B\} \implies C(e) = P(e)$. 
    Let $A,B$ be an arbitrary bipartition of $V(G)$, and fix an arbitrary order on $V(G)$ (so we may refer to the $i$th neighbor of some vertex). 

    We shall make use of the following hardness result:

    \begin{lemma}[Theorem 2.3 in \cite{marx_np-completeness_2005}]
        \marxcoloring ~is NP-complete.
    \end{lemma}

    \paragraph*{Construction} 
    Our construction for the directed case is a specific orientation of our construction for the undirected case. Consequently, we shall describe the directed construction, which implicitly also specifies the undirected construction -- but still detail explicitly, for example, that edge-gadgets can only be traversed from an $A$-gadget to a $B$-gadget (which is trivial in the directed case). 

    In our construction, the inclusion of a \emph{bold time-edge} $(x,y,t)$ essentially dictates that the edge $(x,y)$ is assigned time $t$ exactly in any temporal assignment satisfying all demands. To realize this constraint, we introduce a temporal path of length and duration $t-1$ on new vertices $xy_1,\ldots,xy_{t-1}$ and $x$, as shown in \cref{fig:bold-time-edges} and include $(xy_1, y,t)$ in our demands. Note that in the case where $t=1$ no new vertices are created -- only the demand $(x,y,1)$. 

    \begin{figure}[!ht]
        \centering
        \includegraphics[page=16,width=\linewidth]{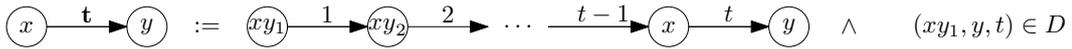}
        \caption{Our gadget ensuring that bold time-edges are never delayed.}
        \label{fig:bold-time-edges}
    \end{figure}

    The reader may find the diagram in \cref{fig:planar-reduction} helpful. 
    We first describe the graph $G'$ for our instance of \delaybetter, and then the demands $D$. (For now, we let the initial temporal assignment $\lambda$ be $1$ everywhere except for bold time-edges and their gadgets.) 

    \begin{figure}[!ht]
        \centering
        \includegraphics[page=21,width=1\linewidth]{img/sas_ipefigs.pdf}
        \caption{A sketch of our reduction from \marxcoloring ~to \delaybetter. Only bold time-edges are labeled. 
        }
        \label{fig:planar-reduction}
    \end{figure}

    For each vertex $v \in V(G)$, we create a \emph{vertex-gadget} consisting of a copy of $v$ and 12 other vertices $s_B^v,s_R^v,s_G^v,v_B^1,v_B^2,v_B^3,v_R^1,v_R^2,u_R^3,v_G^1,v_G^2,v_G^3$ (subscripts represent color; superscript $i$ represents the $i$th neighbor of $v$). These vertices are connected differently depending on whether $v \in A$ or $v \in B$, as shown in \cref{fig:planar-reduction}. In a vertex-gadget, we call \emph{spoke edges} those edges which are not bold, and \emph{blue} (resp. \emph{red, green}) \emph{layer} the vertices $v_B^i$ (resp. $v_R^i,v_G^i$).

    For each edge $(u,v) \in E(G)$ with $u\in A, v \in B$, we also create three vertices $uv_B, uv_R, uv_G$. 
    Then if $u$ is the $i$th neighbor of $v$ and $v$ is the $j$th neighbor of $u$, we introduce six bold time-edges $(u^i_G,uv_B,7),(uv_B,v_B^j,8),(u^i_G,uv_R,10),(uv_R,v_B^j,11),(u^i_G,uv_G,13),(uv_G,v^j_B,14)$. (In \cref{fig:planar-reduction} $u$ is the second neighbor of $v$ and vice versa.) If $P(u,v)$ is precolored $G$ under $P$, then we delete two of $uv_B,uv_R$, or $uv_G$, as appropriate, leaving just one path from $u$ to $v$.
    In \cref{fig:planar-reduction}: $u$ is the second neighbor of the $v$; $v$ is the second neighbor of $u$; and $P(u,v)=U$.
    

    We make use of three types of demands:
    \begin{description}
        \item[Bold demands] as described earlier and shown in \cref{fig:bold-time-edges}.
        \item[Hermits] demands from a vertex in a vertex-gadget to another vertex in the same gadget. For each vertex $u\in A$ we have demands $(s_B^u,u_B^3,4), (s_R^u,u_R^3,8),$ and $(s_G^u,u_G^3,12)$, and for each vertex $v \in B$ we have demands $(s_B^v, v, 13),(s_R^v, v, 16),$ and $(s_G^v, v, 19)$. (We say hermits have the color of the layer their source or destination lies in.)
        \item[Travelers] demands from a vertex in an $A$-gadget to a vertex in a $B$-gadget. For each edge $(u,v)$ in the  \marxcol ~instance with $u \in A$ and $v \in B$, we add a demand $(u,v,19)$.
    \end{description}

    This concludes our construction.

    \paragraph*{Correctness}

    \begin{claim}
        If the \marxcol ~instance $G,P$ is a yes-instance, then the \delaybetter ~instance $(G',\lambda),D$ is a yes-instance.
    \end{claim}

    \begin{claimproof}
        We shall construct a delaying $\lambda'$ of the initial temporal assignment $\lambda$ satisfying all demands in $D$. 
        Consider a proper edge coloring $C$ of $G$ which extends $P$. 

        First, we do not delay any bold time-edges -- i.e., for those, $\lambda(e)=\lambda'(e)$. Note that all bold demands are immediately satisfied under any such labeling. 
        
        Let $(u,v)$ be an edge assigned color $B$ (resp. $R,G$) under $C$, with $u$ being the $i$th neighbor of $v$ and $v$ being the $j$th neighbor of $u$. We assign:
        \begin{itemize}
            \item $\lambda'(u,u^j_B)=2$ (resp. $5,8$)
            \item $\lambda'(u^j_B,u^j_R)=3$ (resp. $6,9$)
            \item $\lambda'(u^j_R,u^j_G)=4$ (resp. $7,10$)
            \item (time-edges into and out of $uv_B,uv_R,uv_G$ are all bold)
            \item $\lambda(v_B^i,v_R^i)=11$ (resp. $14,17$)
            \item $\lambda(v_R^i,v_G^i)=12$ (resp. $15,18$)
            \item $\lambda(v_G^i,v)=13$ (resp. $16,19$)
        \end{itemize}

        It should be clear that this labeling creates a temporal path from $u$ to $v$ for each edge $(u,v)\in G$ such that the traveler demands are satisfied (via $uv_B,uv_R,$ or $uv_G$ depending on whether the edge was colored $B,R,$ or $G$ under $P$).
        
        We now show hermit demands are satisfied as well: because $P$ is a proper 3-edge-coloring of a cubic graph, every vertex is incident to exactly one edge of each color. 
        
        In $A$-gadgets, the hermit starting at $s^u_B$ (resp. $s^u_R,s^u_G$) has a temporal path to $u^i_B$ (resp. $u^i_R,u^i_G$) arriving by time $2$ (resp. $6,10$) if the edge from $u$ to its $i$th neighbor is assigned $B$ (resp. $R,G$) under $C$. The hermit can then (if $i \ne 3$) use the bold time-edges to reach $u_B^3$ (resp. $u^3_R, u^3_G$).

        In $B$-gadgets, the hermit starting at $s^v_B$ (resp. $s^v_R,s^v_G$) has a temporal path to $v^i_B$ (resp. $v^i_R,v^i_G$) arriving by time $10$ (resp. $14,18$) using the bold time-edges. If the edge from $v$ to its $j$th neighbor is assigned $B$ (resp. $R,G$) under $C$, then the hermit can extend this path by using the spoke edges from $v_B^j$ (resp. $v_R^j,v_G^j$) into $v$.
    \end{claimproof}

    The remainder of the proof is devoted to showing the opposite implication; that is, if \delaybetter ~instance $(G',\lambda),D$ is a yes-instance  (i.e., there exists some delaying $\lambda'$ of $\lambda$ satisfying all demands in $D$) then the \marxcol ~instance $G,P$ is a yes-instance.
    For some $\lambda'$, we say that the traveler from $u$ to $v$ is \emph{blue} (resp. \emph{red, green}) if that traveler is routed through a vertex $uv_B$ (resp. $uv_R, uv_G$). (If several paths are possible, one may be chosen arbitrarily - though as we shall see this never happens.) No traveler has more than one color: each traveler goes through exactly one edge-gadget, from its starting $A$-gadget to its ending $B$-gadget (due to the bold time-edges enforcing the direction of the edge-gadget).  

    We make repeated use of the fact that, by construction, bold time-edges are never delayed. Note that if $\lambda=\bf 1$ everywhere including bold gadgets, then these force their edge to be at exactly the intended time in the delaying $\lambda'$.

    \begin{claim}\label{clm:a-gadg-times}
        Let $u\in A$. Then there is exactly one $i$ such that $\lambda'(u,u_B^i)\in[2,4]$ (resp. $[5,7],[8,10]$); there is at least one $i$ such that $\lambda'(u_B^i,u_R^i)\in[6,8]$ (resp. [9-11]); and there is at least one $i$ such that $\lambda'(u_R^i,u_G^i)\in[10,12]$.
    \end{claim}

    \begin{claimproof}
        The claim holds as a consequence of the hermit demands. The blue (resp. red, green) hermit must reach the blue (resp. red, green) layer using at least one (resp. two, three) spoke edge(s), arriving by time $4$ (resp. $8$, $12$) at the latest and departing from $u$ at time $2$ (resp. $5$, $8$) at the earliest.
    \end{claimproof}

    \begin{claim}\label{clm:a-gadg-thirds}
        At most $\frac{1}{3}$ of travelers are blue and at most $\frac{2}{3}$ of travelers are red or blue. 
    \end{claim}

    \begin{claimproof}
        First, suppose over a third of travelers are blue. Then the $A$-gadget of some vertex $u$ has at least two travelers reaching different vertices of its green layer by time $6$ (and, necessarily, different vertices of its red layer by time $5$). This entails that at least two of the spoke edges between the blue and red layers in that gadget are at time $5$ or less, which contradicts \cref{clm:a-gadg-times}.
        Similarly, if over two thirds of travelers are red or blue, then the $A$-gadget of some vertex $u$ has at least three travelers reaching three different vertices of the green layer by time $9$, entailing that the three spoke edges from the red layer to the green layer are at time $9$ or earlier and again contradicting \cref{clm:a-gadg-times}.
    \end{claimproof}

    The following result is obtained through similar reasoning to that for \cref{clm:a-gadg-times}:
    \begin{restatable}[$\ast$]{claim}{bGadgHermitsClaim}\label{clm:b-gadg-hermits}
        Let $v\in B$. Then under $\lambda'$, there is some $i$ such that $v_B^i-v_R^i-v_G^i-v$ is a temporal path with departure time in $[9,11]$ and arrival time in $[11,13]$; there is some $i$ such that $v_R^i-v_G^i-v$ is a temporal path with departure time in $[13,15]$ and arrival time in $[14,16]$; and there is some $i$ such that $\lambda'(v_G^i,v)\in[17,19]$.
    \end{restatable}
    


    \begin{claim}\label{clm:b-gadg-arrivals}
        Let $v\in B$. Then at least 1 traveler arrives at the $B$-gadget of $v$ at time $8$; and at least 2 travelers arrive at the $B$-gadget of $v$ at time $8$ or time $11$.
    \end{claim}
    \begin{claimproof}
        First note that all travelers arriving at the $B$-gadget come from some edge-gadget and consequently arrive at a time in $\{8,11,14\}$.
        Applying \cref{clm:b-gadg-hermits}, there is some $i$ such that any traveler arriving at $v_B^i$ strictly after time $10$ would be stranded there -- so the traveler arriving from the $i$th neighbor of $v$ must arrive at time $8$. Likewise, there is some $j$ different from $i$ such that any traveler arriving at $v_R^j$ strictly after time $14$ would be stranded there -- so the traveler arriving from the $j$th neighbor of $v$ must arrive at $v_R^j$ by time $14$ and so at $v_B^j$ at time $8$ or time $11$.
    \end{claimproof}

    The proof of the following is similar to that of \cref{clm:a-gadg-thirds}:
    \begin{restatable}[$\ast$]{claim}{bGadgThirdsClaim}\label{clm:b-gadg-thirds}
        At least $\frac{1}{3}$ of travelers are blue and at least $\frac{2}{3}$ of travelers are red or blue.
    \end{restatable}


    For some $\lambda'$, we say that the traveler from $u$ to $v$ is \emph{blue} (resp. \emph{red, green}) if the temporal path used to route that traveler goes through a vertex $uv_B$ (resp. $uv_R, uv_G$). 

    \begin{claim}
        The colors of travelers in $(G',\lambda')$ fully specify a proper edge-coloring of $G$ which is consistent with the precoloring $P$. 
    \end{claim}
    \begin{claimproof}
        First, note that the precoloring is consistent with $P$ because precolored edges in $G$ have edge-gadgets consisting of only one vertex, ensuring that the traveler is assigned the appropriate color.

        Next, observe that Claims \ref{clm:a-gadg-thirds} and \ref{clm:b-gadg-thirds} together entail that \emph{exactly} $\frac{1}{3}$ of travelers are blue and \emph{exactly} $\frac{1}{3}$ of travelers are red. Moreover, the proof of those claims holds locally; exactly one of the three travelers leaving any given $A$-vertex is blue (resp. red), and exactly one of the three travelers arriving at any given $B$-vertex is blue (resp. red).
    \end{claimproof}

    This concludes the proof that the \marxcol ~instance $(G,P)$ is a yes-instance if the \delaybetter ~instance $(G,\lambda),D$ was a yes-instance. 
    
    We emphasize at this point that our construction preserves planarity and that in the directed case, the footprint contains no directed cycles. We recall that in the undirected case bold time-edges enforce that travelers can only go from an $A$-gadget to a $B$-gadget once.
    The maximum degree in the graph is $10$ (due to vertices $u_G^i$, which are incident to 4 bold gadgets in addition to 6 normal edges). 
    Note that the proof still holds if the initial temporal assignment $\lambda$ assigns time $2$ to every non-bold edge in an $A$-gadget and time $9$ to every non-bold edge in a $B$-gadget, in which case the largest delay is of $10$ (delaying a time-edge from the green layer of a $B$-gadget to a $B$-vertex $v$ to be at time $19$). 
    Consequently, our proof also shows that $\delta$-\delaybetter ~is NP-hard for $\delta\ge 9$. 
    
    On the other hand, the proof also holds if the initial temporal assignment is instead the constant function $\bf 1$: studying \cref{fig:bold-time-edges} it can be seen that this would still result in bold time-edges being assigned the intended time under $\lambda'$. 

    We have membership of NP from \cref{lem:in-np}, and the result follows.
\end{proof}

\bibliography{sas-bib2doi}

\appendix

\section{Deferred proofs}
\subsection{Reduction from \texorpdfstring{$\delta$}{delta}-DB to DB}

\earlyHermitsClaim*

\begin{claimproof}
    The claim follows straightforwardly from pareto optimality of $\lambda^*$, and the fact that hermit trailheads are used only by the hermit, who can wait at the vertex $u$ instead of at $u'$.
\end{claimproof}


    


\maxDelayTwoDeltaClaim*

    \begin{claimproof}
        By construction, there is a hermit demand $(u',v',2t+2\delta+1)$. This hermit must use the edge $(u,uv)$ (since the new vertex $uv$ has no other incoming edges and $v'$ is only reachable from $uv$). The hermit must use this edge no earlier than time $2t$
        (as this is its original time under $\lambda'$)
        and no later than time $2t+2\delta$
        (as the next edge in the temporal path must be at time $2t+2\delta+1$ exactly).
    \end{claimproof}

\subsection{Final lifetime is polynomial \wlogsm}

\polyTmaxLemma*
\begin{proof}
Given an instance $(\tgraph, D)$ of either problem, we identify the set of all \emph{explicit} times (directly encoded in the input) as $T_{explicit} := \{d_t | d \in D\} \cup \{\lambda(e) | e \in E(\tgraph)\}
$, where $d_t$ is the arrival time specified by $d$.
Denote $|T_{explicit}|$ by $\alpha \le |E| + |D|$ (this inequality is strict if any time appears explicitly more than once in $(\tgraph,D)$). 
We then may sort $T_{explicit}$ into an ordered list of times $t_1 < t_2 < \ldots < t_{\alpha}$. 

\emph{Shrinking} of an interval $[t_i,t_j]$ to be of size $\ell$ consists in decrementing all times $t_j$ or greater in the original instance by $t_j - \ell- t_i \ge 0$. Thus, any edge (or demand) formerly at time $t_j$ is updated to be at time $t_i+\ell$. \emph{Deleting} a time interval $[t_i,t_j]$ consists in shrinking that time interval to have size $0$.

We first deal with \textsc{DelayBetter} and \textsc{Path DB}. 
Consider the integer intervals $[t_i, t_{i+1}]$. If any such interval has size greater than $|E|$, we may without loss shrink the interval to have size $|E|$ instead. 
No-instances of both problems are clearly preserved by the operation. Yes-instances are also preserved: only the relative order of times assigned to edges matters for a temporal path to exists, and any ordering achievable in the original instance is also achievable in the transformed instance since at most $|E|$ unique times are assigned under $\lambda'$ in total. 

We now deal with \textsc{$\delta$-DelayBetter}. We identify the set of \emph{relevant} times to be $T_{relevant}:= \bigcup_{t \in T_{explicit}}: [t,t+\delta]$. Note that this set has cardinality at most $\delta \cdot (|E| + |D|)$, and that it contains all possible times used in any solution $\lambda'$. Hence we then may eliminate every time not in $T_{relevant}$ (by deleting at most $|E|+|D|$ intervals) and obtain an equisatisfiable instance with $T_{\max} \le \delta \cdot \alpha$. 

In both cases, the procedure clearly runs in time $\mathrm{poly}(\log T_{\max} + |V(\tgraph)| + |D|)$, and we obtain the desired result. 
\end{proof}

\subsection{Containment in NP}

\inNPLemma*

\begin{proof}
    Given an instance $I=(\tgraph, D)$ of \textsc{($\delta$-)DelayBetter} and a corresponding solution, i.e., an assignment $\lambda'$ of time-labels (which can delay edges of the initial assignment $\lambda$), we can check in polynomial time whether $\lambda'$ is indeed a valid solution for $I$ as follows.

    First, we need to check that the assignment $\lambda'$ actually represents valid delays (i.e., that no edge was moved to an earlier point in time). 
    To do so, we check in $O(|\mathcal{E}|)$ whether for every $e \in \mathcal{E}$ we have $\lambda(e) \leq \lambda'(e)$ (for the case of \textsc{$\delta$-DelayBetter}, we also check that $\lambda(e) + \delta \le \lambda'(e)$).

    It remains to check the demands are met by the assignment. 
    The earliest arrival time $\mathrm{arr}_{u\to v}$ of any strict temporal path from $u$ to $v$ in the temporal graph $(G, \lambda')$ may be computed in polynomial time (see, e.g. \cite{wu_path_2014}). It then suffices to verify, for each $(u,v,t)\in D$, that $\mathrm{arr}_{u\to v}\le t$, which can be done in polynomial time, and the result follows.
 \end{proof}

\subsection{Integrality of the Linear Program}
We restate \cref{clm:integral-lp}. For convenience, we also include the LP again here:
\begin{align*}
    &\mathrm{maximize} \sum_{d\in D} d_t - t_{d_f}' \text{ subject to}\\
	&t_{uv} \le t'_{uv} \text{ for each } (u,v)\in E(G)\\
	&t'_{uv} \le  t_{uv} + \delta \text{ for each } (u,v)\in E(G) \text{ (only for $\delta$-\textsc{Path-DB})}\\
	&t_{uv}' \le t'_{vw}+1 \text{ for each pair of edges $uv$ and $vw$ such that $uv \prec vw$}\\
	&t_{d_f}' \le d_t\text{ for each demand $d$}
\end{align*}

\integralLPClaim* 

\begin{claimproof}

    Suppose otherwise. That is, there is some non-integral solution $X$ to the LP which is strictly better than any integral solution.
    
    Under $X$, for some edge $vw$, $t_{vw}'$ is assigned a non-integer value, say $x= y + \epsilon$ with $y \in \mathbb{N}$ and $0<\epsilon < 1$.

    Consider the assignment obtained by instead setting $t_{vw}'=y$. If this is still a valid solution to the LP, then this clearly does not worsen the objective (and cannot improve it since we assumed $X$ was optimal). Apply this update iteratively, everywhere possible, and consider the new solution $Y$ obtained. By our initial premise, $Y$ is still not an integral solution, and by construction $Y$ has the same objective value as $X$ and also would cease to be a solution if any of its non-integer variables were rounded down to the nearest integer.

    We again can find some (possibly different) edge $vw$ such that $t_{vw}'$ is assigned a non-integer value under $Y$, now $y = z + \epsilon$ with $z \in \mathbb{N}$ and $0<\epsilon < 1$. 

    Consider the assignment obtained by instead setting $t_{vw}'=z$. Necessarily this assignment is not a valid solution for the LP (since otherwise we already would have performed the update). Consequently, there is some constraint which is violated by the update, which necessarily has form $t_{uv}' \le t'_{vw} + 1$, since all other types of constraints would remain satisfied if we set $t_{uv}'=z$. Moreover, $t_{uv}'$ must itself be assigned some non-integer value (strictly less than that assigned to $t_{vw}'$) under $Y$. By iteratively applying the same logic (and the fact that there are only finitely many edges) we conclude some edge must be assigned a non-integer value under $Y$ even though it could have been rounded down to the nearest integer - contradicting a central property of the assignment $Y$. 
    We note that our construction for $Y$ may be performed in polynomial time to iteratively construct an integral solution from a non-integral one, and the claim follows.
\end{claimproof}

\subsection{Hardness for planar instances}

\bGadgHermitsClaim*

\begin{claimproof}
    Analogously to the proof of \cref{clm:a-gadg-times}, we need only concern ourselves with hermits to prove this claim. 
    The blue hermit must travel from the blue layer to $v$ as specified in the claim (since it cannot make use of any bold edges outside the blue layer in the temporal path). Similarly, the red hermit must reach $v$ by a temporal path not using any bold edges in the green layer, and the green hermit must reach $v$ using some spoke edge from the green layer in the interval $[17,19]$.
\end{claimproof}

\bGadgThirdsClaim*
\begin{claimproof}
        The proof is similar to that of \cref{clm:a-gadg-thirds}. If less than a third of travelers are blue, then some $B$-gadget has all three travelers arriving strictly after time $8$, contradicting \cref{clm:b-gadg-arrivals}. And if less than two thirds of travelers are red or blue, then some $B$-gadget has at least two travelers arriving at time $14$, again contradicting \cref{clm:b-gadg-arrivals}.        
    \end{claimproof}

\end{document}